\documentclass[journal]{IEEEtran}

\usepackage{setspace}


\usepackage{graphics,
           psfrag,
           epsfig,
           amsthm,
           cite,
           amssymb,
           url,
           dsfont,
           subfigure,
           algorithm,
           algorithmic,
           balance,
           enumerate,
           color,
           setspace
}
\usepackage{amsmath}

\usepackage{epstopdf}

\newtheorem{lemma}{Lemma}

\newtheorem{theorem}{Theorem}
\newtheorem{corollary}{Corollary}
\newtheorem{proposition}{Proposition}

\newtheorem{remark}{Remark}

\newcommand{\eref}[1]{(\ref{#1})}
\newcommand{\sref}[1]{Section~\ref{#1}}
\newcommand{\appref}[1]{Appendix~\ref{#1}}
\newcommand{\fref}[1]{Figure~\ref{#1}}

\newcommand{\cref}[1]{Constraint~\ref{#1}}
\newcommand{\thref}[1]{Theorem~\ref{#1}}
\newcommand{\corref}[1]{Corollary~\ref{#1}}
\newcommand{\lref}[1]{Lemma~\ref{#1}}

\newcommand{\algref}[1]{Algorithm~\ref{#1}}

\hyphenation{op-tical net-works semi-conduc-tor}

\newcommand{\ignore}[1]{}


\epsfxsize=3.0in
\pagestyle{plain}
\IEEEoverridecommandlockouts
%


\addtolength{\textfloatsep}{-2mm}
\setlength{\abovedisplayskip}{0.5mm}
\setlength{\belowdisplayskip}{0.5mm}
\setlength{\abovecaptionskip}{0.5mm}
\setlength{\belowcaptionskip}{0.5mm}
\setlength{\floatsep}{1mm}

\begin{document}

\title{\vspace{-.5cm}{Distributed Hybrid Scheduling in Multi-Cloud Networks using Conflict Graphs}}

\author{
   \authorblockN{Ahmed Douik, \textit{Student Member, IEEE}, Hayssam Dahrouj, \textit{Senior Member, IEEE},\\ Tareq Y. Al-Naffouri, \textit{Member, IEEE}, and Mohamed-Slim Alouini, \textit{Fellow, IEEE}\vspace{-.8cm}}

\thanks {A part of this paper \cite{3514852} is published in IEEE Global Telecommunications Conference (GLOBECOM' 2015), San Diego, CA, USA.

Ahmed Douik is with the Department of Electrical Engineering, California Institute of Technology, Pasadena, CA 91125 USA (e-mail: ahmed.douik@caltech.edu).

Hayssam Dahrouj is with the Department of Electrical Engineering, Effat University, Jeddah 22332, Saudi Arabia (e-mail: hayssam.dahrouj@gmail.com).

T. Y. Al-Naffouri and M.-S. Alouini are with the Division of Computer, Electrical and Mathematical Sciences, and Engineering, King Abdullah University of Science and Technology, Thuwal 23955-6900, Saudi Arabia (e-mail: \{tareq.alnaffouri,slim.alouini\}@kaust.edu.sa).
}
}

\maketitle

\begin{abstract}
Recent studies on cloud-radio access networks assume either signal-level or scheduling-level coordination. This paper considers a hybrid coordinated scheme as a means to benefit from both policies. Consider the downlink of a multi-cloud radio access network, where each cloud is connected to several base-stations (BSs) via high capacity links, and, therefore, allows for joint signal processing within the cloud transmission. Across the multiple clouds, however, only scheduling-level coordination is permitted, as low levels of backhaul communication are feasible. The frame structure of every BS is composed of various time/frequency blocks, called power-zones (PZs), which are maintained at a fixed power level. The paper addresses the problem of maximizing a network-wide utility by associating users to clouds and scheduling them to the PZs, under the practical constraints that each user is scheduled to a single cloud at most, but possibly to many BSs within the cloud, and can be served by one or more distinct PZs within the BSs' frame. The paper solves the problem using graph theory techniques by constructing the conflict graph. The considered scheduling problem is, then, shown to be equivalent to a maximum-weight independent set problem in the constructed graph, which can be solved using efficient techniques. The paper then proposes solving the problem using both optimal and heuristic algorithms that can be implemented in a distributed fashion across the network. The proposed distributed algorithms rely on the well-chosen structure of the constructed conflict graph utilized to solve the maximum-weight independent set problem. Simulation results suggest that the proposed optimal and heuristic hybrid scheduling strategies provide appreciable gain as compared to the scheduling-level coordinated networks, with a negligible degradation to signal-level coordination.
\end{abstract}

\begin{keywords}
Multi-cloud networks, coordinated scheduling, scheduling-level coordination, signal-level coordination, centralized and distributed scheduling.
\end{keywords}

\section{Introduction}\label{sec:int}

Next generation mobile radio systems ($5$G) are expected to undergo major architectural changes, so as to support the deluge in demand for mobile data services by increasing capacity, energy efficiency and latency reduction \cite{6824752,6736746}. One way to boost throughput and coverage in dense data networks is by moving from the single high-powered base-station (BS) to the massive deployment of overlaying BSs of different sizes. Such architecture, however, is subject to high inter-BS interference, especially with the progressive move towards full spectrum reuse in $5$G. Traditionally, interference mitigation is performed by coordinating the different BSs through massive signaling and message exchange. Such coordination technique, however, in addition to being energy-inefficient \cite{6983623}, may not always be feasible given the capacity limits of the backhaul links.

A promising network architecture for fulfilling the ambitious metrics of 5G is the cloud-radio access network (CRAN) \cite{5594708,7143328}, which is obtained by connecting the different BSs to a central unit, known as the cloud. Such architecture moves most of the fundamental network functionalities to the cloud side, thereby allowing a separation between the control plane and the data plane. The virtualization in CRANs provides efficient resource utilization, joint BSs operation (joint transmission, encoding and decoding), and efficient energy control.

Different levels of coordination in CRANs are studied in the past literature, namely the signal-level coordination \cite{6799231,6588350,6786060}, and the scheduling-level coordination \cite{6525475,6811617,117665}. In signal-level coordinated CRANs \cite{6799231,6588350,6786060}, all the data streams of different users are shared among the different BSs, thereby allowing joint operation. However, such level of coordination necessitates high-capacity backhaul links. On the other hand, in scheduling-level coordinated CRANs \cite{6525475,6811617,117665}, the cloud is responsible only for the efficient allocation of the resource blocks of each BS, which requires much less backhauling. While more practical to implement, scheduling-level coordination may lead to an inferior performance as compared to signal-level coordination. {While clouds are typically connected to their base-stations through high-capacity links, cloud-to-cloud communication is done via wireless links. This paper, therefore, proposes a hybrid scheduling scheme which benefits from the advantages of both scheduling policies. In particular, the paper proposes using signal-level coordination within each cloud, and scheduling-level coordination among different clouds.}

Consider the downlink of a multi-CRAN, where each cloud is connected to several BSs. The frame structure of every BS is composed of various time/frequency blocks, called power-zones (PZs), kept at a fixed power level. This paper proposes a hybrid level of coordination for the scheduling problem. For BSs connected to the same cloud, associating users to PZs is performed assuming signal-level coordination. Across the multiple clouds, only scheduling-level coordination is permitted, as it requires a lower level of backhaul communication.

In this paper context, hybrid-level coordination refers to the scheme wherein multiple clouds coordinate their transmission on a scheduling-level basis only. Every cloud, however, is responsible for coordinating the transmission of its connected base-stations on a signal-level basis. The hybrid scheduling problem then denotes the strategy of assigning users to clouds across the network, under the system limitation that each user is scheduled at most to a single cloud since, otherwise, inter-cloud signal-level coordination is required. However, across the BSs connected to one cloud, users can be served by multiple BSs and different PZs within each transmit frame. Each PZ is further constrained to serve exactly one user.

\subsection{Related Work}

The paper is related in part to the classical works on scheduling, and in part to the recent works on CRAN. In the classical literature of cellular systems, scheduling is often performed assuming a prior assignment of users to BSs, e.g., the classical proportional fairness scheduling investigated in \cite{6525475,5464705}. In CRANs, recent works on coordinated scheduling consider a single cloud processing, as in \cite{6811617,117665}. Reference \cite{6811617} considers the particular case of coordinated scheduling when the number of users is equal to the number of available power-zones. Reference \cite{6811617} shows that, in a context of a soft-frequency reuse, the problem reduces to a classical linear programming problem that can be solved using the auction methodology \cite{49842514}. The problem is extended to an arbitrary number of users and power-zones in \cite{117665} and is shown to be an NP-hard problem. This paper is further related to the multi-cloud network studied in \cite{ouss_glob1,6799231} which, however, assume a pre-known user-to-cloud association.

Interference mitigation in CRANs via signal-level coordination has also been pivotal in the past few years. The authors in \cite{6588350} consider the problem of maximizing the weighted sum-rate under finite-capacity backhaul and transmit power constraints. Unlike previous studies in which compression is performed independently of the base-station operations, the authors in \cite{6588350} consider a joint precoding and backhaul compression strategy. Reference \cite{6786060} considers the problem of minimizing the total power consumption by accounting for the transport link power in a green-CRAN and proposes solving the problems using techniques from compressive sensing and optimization theory. Reference \cite{4450840} derives bounds on the achievable ergodic capacity to quantify the user diversity gain. In a classic multi-cell network setup, reference \cite{6831362} investigates the problem of joint beamforming design in a multi-cell system where multiple base-stations can serve each scheduled user. Using compressive sensing technique, reference \cite{6831362} illustrates the interplay between the transmit sum-power and the backhaul sum-capacity required to form the clusters, under fixed signal-to-interference-and-noise ratio (SINR) constraints.

All the aforementioned network optimization algorithms are centralized in nature, which is not always practically feasible for computational complexity reasons. This paper addresses this issue by proposing distributed algorithms so as to lessen the computational complexity and facilitate the practical implementation of the proposed methods. The paper is, therefore, related to the recent state-of-art on distributed scheduling, {e.g.,  \cite{4432271,5199027,5165179,4686832,6086561,6492306}.} While reference \cite{4432271} considers maximizing the capacity based on the complete co-channel gain information, reference \cite{5199027} considers the average channel state information only. Reference \cite{5165179}, on the other hand, proposes a distributed algorithm for interference mitigation which automatically adjusts the transmit power in orthogonal frequency division multiple access (OFDMA) based cellular systems. Reference \cite{4686832}, further, investigates a distributed scheduling approach to maximize the sum-rate using zero-forcing beamforming in a multiple-antenna base-station setup. Reference \cite{4686832}, particularly, shows that distributing the computations of the scheduling problem across the users provides satisfactory results by decreasing the computational complexity and reducing overhead. A survey on useful distributed techniques can be found in \cite{6492306}. 

The distributed algorithms presented in this paper are also related to solutions suggested in \cite{6086561,4389757,1626432,4558622,4151582}. References \cite{6086561,4389757,1626432,4558622,4151582}, however, are based on a game-theoretical formulation of the resource allocation problem. For instance, reference \cite{4389757} proposes a distributed algorithm for resource allocation and adaptive transmission in a multi-cell scenario, which enables a trade-off between the aggressive reuse of the spectrum and the consequent co-channel interference. Similarly, in \cite{4151582}, the balance between the power and resource allocation is investigated. Finally, while reference \cite{1626432} considers the distributed power control scheme in wireless ad hoc networks, reference \cite{4558622} studies the problem in wireless OFDM systems.

\subsection{Contributions}

Unlike the aforementioned references, this papers considers the downlink of a multi-CRAN, where each cloud is connected to several base-stations (BSs) via high capacity links and, therefore, allows for joint signal processing within the cloud transmission. Across the multiple clouds, however, only scheduling-level coordination is permitted. The frame structure of every BS is composed of various power-zones, which are maintained at a fixed power level. The paper then addresses the coordinated scheduling with an objective of maximizing a generic utility function. The paper's main contribution is to solve the problem optimally using techniques inherited from graph theory. The paper proposes both optimal and heuristic distributed solutions to the problem. The paper also explicitly characterizes the extremes in scheduling policies, i.e., either scheduling-level or signal-level coordination, and proposes solving the problems using graph-theory based algorithms.

The first part of the paper investigates the centralized coordinated scheduling problem. It considers the architecture wherein all the clouds are connected to a central processor that is responsible for computing the scheduling policy and maintaining the synchronization of the different transmit frames. The paper proposes solving the hybrid scheduling problem by constructing the conflict graph, in which each vertex represents an association of cloud, user, base-station and power-zone. The solution then relies on reformulating the problem as a maximum-weight independent set problem that can be optimally solved using efficient algorithms, e.g., \cite{15522856,2155446,6848102,5341909}.

The second part of the paper investigates the distributed coordinated scheduling problem. It considers the scenario wherein the different clouds are connected through low capacity links. In this configuration, the optimal scheduling decision is reached through intelligent, reasonable information exchange among the clouds. The distributed solution is achieved via the construction of local conflict graphs and the local solutions of the maximum-weight independent set problem. In order to produce a feasible solution, a conflict resolution phase comes afterward to ensure that each user is scheduled to at most a single cloud. Further, the paper proposes a low complexity, heuristic, distributed solution that relies on strictly assigning users to clouds according to the highest utility.

Finally, the paper considers both the scheduling-level and signal-level coordination separately and shows how each setup can be solved as a particular case of the generic framework. The paper simulation results suggest that the proposed hybrid scheduling strategy provides appreciable gain as compared to the scheduling-level coordinated networks, with a negligible degradation to signal-level coordination.

The rest of this paper is organized as follows: In \sref{sec:sys}, the system model, and the problem formulation are presented. \sref{sec:mul} proposes a solution to the hybrid scheduling problem. In \sref{sec:dis}, optimal and heuristic distributed solutions are presented. \sref{sec:ful} presents the scheduling solution of signal and scheduling level coordinated networks. Simulation results are discussed in \sref{sec:sim}, and conclusions are presented in \sref{sec:con}.

\section{System Model and Problem Formulation}\label{sec:sys}

\subsection{System Model and Parameters}

Consider the downlink of a multi-CRAN of $C$ clouds serving $U$ users in total. The $C$ clouds are connected to a central cloud. Each cloud (except the central one) is connected to $B$ BSs and is responsible for the signal-level coordination of the connected BSs. \fref{fig:network1} illustrates a multi-CRAN formed by $U=21$ users, and $C=3$ clouds each coordinating $B=3$ BSs. Let $\mathcal{C}$ be the set of clouds in the system each coordinating the set of BSs $\mathcal{B}$. All BSs and users are equipped with single antennas. Let $\mathcal{U}$ be the set of users in the network ($|\mathcal{U}|= U$, where the notation $|\mathcal{X}|$ refers to the cardinality of a set $\mathcal{X}$). The transmit frame of each BS is composed of several time/frequency resource blocks maintained at fixed transmit power. In this paper, the generic term PZ is used to refer to a time/frequency resource block of a BS. Let $\mathcal{Z}$ be the set of the $Z$ PZs of the frame of one BS. The transmit power of the $z$th PZ in the $b$th BS of the $c$th cloud is fixed to $P_{cbz}$, $\forall \ (c,b,z) \in \mathcal{C} \times \mathcal{B} \times \mathcal{Z}$, where the notation $\mathcal{X} \times \mathcal{Y}$ refers to the Cartesian product of the two sets $\mathcal{X}$ and $\mathcal{Y}$. \fref{fig:frame} shows the coordinated frames of the connected BSs in the $c$th cloud. This paper focuses on the scheduling optimization (i.e., for a fixed transmit paper. Optimization with respect to the power values $P_{cbz}$ is left for future research.)

\begin{figure}[t]
\centering
\includegraphics[width=0.7\linewidth]{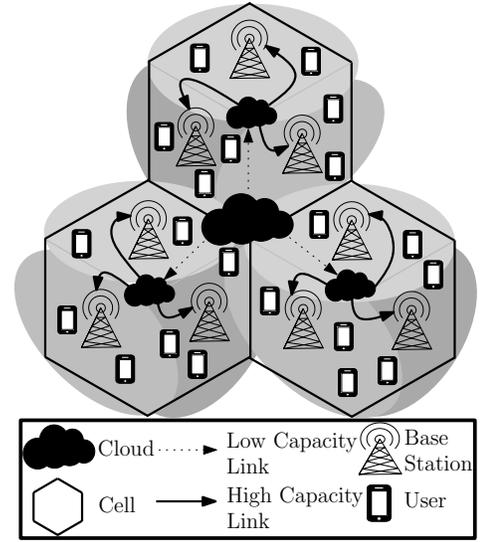}
\caption{Cloud enabled network composed $3$ cells, each containing $3$ base stations and $7$ users.}\label{fig:network1}
\end{figure}
\begin{figure}[t]
\centering
\includegraphics[width=1\linewidth]{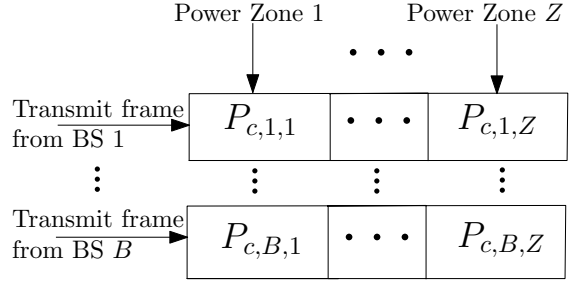}
\caption{Frame structure of $B$ base stations each containing $Z$ power zones.}\label{fig:frame}
\end{figure}

Each cloud $c \in \mathcal{C}$ is responsible for coordinating its $B$ BSs, which allows joint signal processing across them. The central cloud connecting all the clouds $c \in \mathcal{C}$ is responsible for computing the scheduling policy, and also guarantees that the transmission of the different frames are synchronized across all BSs in the network ($CB$ BSs). Let $h_{cbz}^{u} \in \mathds{C},\ \forall \ (c,u,b,z) \in \mathcal{C} \times \mathcal{U} \times \mathcal{B} \times \mathcal{Z}$ be the complex channel gain from the $b$th BS of the $c$th cloud to user $u$ scheduled to PZ $z$. The signal-to-interference plus noise-ratio (SINR) of user $u$ when scheduled to PZ $z$ in the $b$th BS of the $c$th cloud can be expressed as:
\begin{align}
\text{SINR}_{cbz}^{u} = \cfrac{P_{cbz} |h_{cbz}^{u}|^2}{\Gamma(\sigma^2+ \sum\limits_{(c^\prime,b^\prime) \neq (c,b)} P_{c^\prime b^\prime z}|h_{c^\prime b^\prime z}^{u}|^2)},
\end{align}
where $\Gamma$ denotes the SINR gap, and $\sigma^2$ is the Gaussian noise variance. This paper assumes that the cloud is able to perfectly estimate all the values of the channel gains $h_{cbz}^{u}$ and thus the different SINRs. 

\subsection{Scheduling Problem Formulation}

The scheduling problem under investigation in this paper consists of assigning users to clouds and scheduling them to PZs in each BS frame under the following practical constraints.
\begin{itemize}
\item C1: Each user can connect at most to one cloud but possibly to many BSs in that cloud.
\item C2: Each PZ should be allocated to exactly one user.
\item C3: Each user cannot be served by the same PZ across different BSs.
\end{itemize}

Let $\pi_{cubz}$ be a generic network-wide benefit of assigning user $u$ to the $z$th PZ of the $b$th BS in the $c$th cloud. Let $X_{cubz}$ be a binary variable that is $1$ if user $u$ is mapped to the $z$th PZ of the $b$th BS in the $c$th cloud, and zero otherwise. Similarly, let $Y_{uz}$ be a binary variable that is $1$ if user $u$ is mapped to the $z$th PZ of any BS across the network, and zero otherwise. Further, let $Z_{cu}$ be a binary variable that is $1$ if user $u$ is assigned to cloud $c$. The scheduling problem this paper addresses can be formulated as the following 0-1 mixed integer programming problem:
\begin{subequations}
\label{Original_optimization_problem}
\begin{align}
\max & \sum_{c,u,b,z}\pi_{cubz}X_{cubz} \\
\label{constraint1}{\rm s.t.\ } &Z_{cu} = 1 - \delta \bigg(\sum_{b,z}X_{cubz}\bigg), \forall \ (c,u)\in\mathcal{C} \times \mathcal{U},\\
\label{constraint2}& \sum_{c}Z_{cu} \leq 1 , \quad \forall \ u \in\mathcal{U},\\
\label{constraint3}& \sum_{u}X_{cubz}=1, \quad\forall \ (c,b,z)\in \mathcal{C} \times \mathcal{B} \times \mathcal{Z},\\
\label{constraint4}& Y_{uz} = \sum_{cb} X_{cubz} \leq 1,\quad \forall \ (u,z) \in \mathcal{U} \times \mathcal{Z}, \\
\label{constraint5}& X_{cubz},Y_{uz},Z_{cu} \in \{0,1\},
\end{align}
\end{subequations}
where the optimization is over the binary variables $X_{cubz}$, $Y_{uz}$, and $Z_{cu}$ and the notation $\delta(.)$ refers to the discrete Dirac function which is equal to $1$ if its argument is equal to $0$ and $0$ otherwise. Both the equality constraint \eref{constraint1} and the inequality constraint \eref{constraint2} are due to system constraint C1. The equality constraints \eref{constraint3} and \eref{constraint4} correspond to the system constraints C2 and C3, respectively.

Using a generic solver for 0-1 mixed integer programs may require a search over the entire feasible space of solutions, i.e., all possible assignments of users to clouds and PZs of the network BSs. The complexity of such method is prohibitive for any reasonably sized system. The next section, instead, presents a more efficient method to solve the problem by constructing the conflict graph in which each vertex represents an association between clouds, users, BSs, and PZs. The paper reformulates the 0-1 mixed integer programming problem \eref{Original_optimization_problem} as a maximum-weight independent set problem in the conflict graph, which global optimum can be reached using efficient techniques, e.g., \cite{15522856,2155446}.

\section{Multi-Cloud Coordinated Scheduling}\label{sec:mul}

This section presents the optimal solution to the optimization problem \eref{Original_optimization_problem} by introducing the conflict graph and reformulating the problem as a maximum-weight independent set problem. The corresponding solution is naturally centralized, and the computation must be carried at the central cloud connecting all the clouds $c \in \mathcal{C}$.

\subsection{Conflict Graph Construction}

Define $\mathcal{A} = \mathcal{C} \times \mathcal{U} \times \mathcal{B} \times \mathcal{Z}$ as the set of all associations between clouds, users, BSs, and PZs, i.e., each element $a \in \mathcal{A}$ represents the association of one user to a cloud and a PZ in one of the connected BSs frame. For each association $a=(c,u,b,z) \in \mathcal{A}$, let $\pi(a)$ be the benefit of such association defined as $\pi(a)=\pi_{cubz}$. Let $\varphi_c$ be the cloud association function that maps each element from the set $\mathcal{A}$ to the corresponding cloud in the set $\mathcal{C}$. In other words, for $a=(c,u,b,z) \in \mathcal{A}$, $\varphi_c(a)=c$. Likewise, let $\varphi_u$, $\varphi_b$, and $\varphi_z$ be the association functions mapping each element $a=(c,u,b,z) \in \mathcal{A}$ to the set of users $\mathcal{U}$ (i.e., $\varphi_u(a)=u$), to the set of BSs $\mathcal{B}$ (i.e., $\varphi_b(a)=b$), and to the set of PZs (i.e., $\varphi_z(a)=z$), respectively.

The \emph{conflict} graph $\mathcal{G}(\mathcal{V},\mathcal{E})$ is an undirected graph in which each vertex represents an association of cloud, user, BS and PZ. Each edge between vertices represents a conflict between the two corresponding associations. Therefore, the conflict graph can be constructed by generating a vertex $v \in \mathcal{V}$ for each association $a \in \mathcal{A}$. Vertices $v$ and $v^\prime$ are conflicting vertices, and thus connected by an edge in $\mathcal{E}$ if one of the following connectivity conditions (CC) is true:
\begin{itemize}
\item CC1: $\delta(\varphi_u(v) - \varphi_u(v^\prime))(1-\delta(\varphi_c(v) - \varphi_c(v^\prime))) = 1$.
\item CC2: $(\varphi_c(v),\varphi_b(v),\varphi_z(v)) = (\varphi_c(v^\prime),\varphi_b(v^\prime),\varphi_z(v^\prime))$.
\item CC3: $\delta(\varphi_u(v) - \varphi_u(v^\prime))\delta(\varphi_z(v) - \varphi_z(v^\prime)) = 1$.
\end{itemize}

The connectivity constraint CC1 corresponds to a violation of the system constraint C1 as it describes that two vertices are conflicting if the same user is scheduled to different clouds. The connectivity constraint CC2 partially illustrates the system constraint C2, as it implies that each PZ should be associated with at most one user (not exactly one user as stated in the original system constraint). With the additional constraint (see \thref{th1} below) about the size of the independent set, CC2 becomes equivalent to C2. Finally, the edge creation condition CC3 correctly translates a violation of the system constraint C3.

\begin{figure}[t]
\centering
\includegraphics[width=0.8\linewidth]{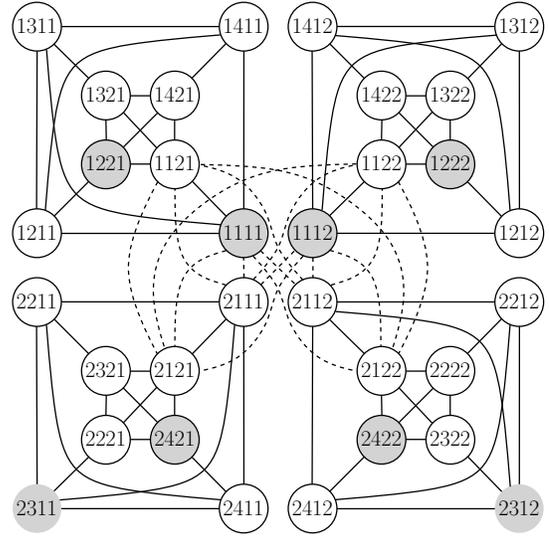}\\
\caption{Example of the conflict graph for a network composed of $2$ clouds, $2$ BSs per cloud, $2$ PZs per BS and a total of $4$ users. Intra-cloud connection are plotted in solid lines. Inter-cloud connections are illustrated only for user $1$ in dashed lines.}
\label{fig:graph}
\end{figure}

\fref{fig:graph} illustrates an example of the conflict graph in a multi-cloud system composed of $C=2$ clouds, $B=2$ BSs per cloud, $Z=2$ PZs per BS and $U=4$ users. Vertices, in this example, are labelled $cubz$, where $c$, $u$, $b$ and $z$ represent the indices of clouds, users, BSs, and PZs, respectively. In this example, $Z_{\text{tot}}=CBZ=8$. As shown in \fref{fig:graph}, each independent set of size $Z_{\text{tot}}$ can be written in the following form:
\begin{enumerate}
\item \{$1a11$,\ $1a12$,\ $1b21$,\ $1b22$,\ $2c11$,\ $2c12$,\ $2d21$,\ $2d22$\}
\item \{$1a11$,\ $1a12$,\ $1b21$,\ $1b22$,\ $2c11$,\ $2d12$,\ $2d21$,\ $2c22$\}
\item \{$1a11$,\ $1b12$,\ $1b21$,\ $1a22$,\ $2c11$,\ $2c12$,\ $2d21$,\ $2d22$\}
\item \{$1a11$,\ $1b12$,\ $1b21$,\ $1a22$,\ $2c11$,\ $2d12$,\ $2d21$,\ $2c22$\},
\end{enumerate}
where $a,b,c,$ and $d$ $\in \{1,2,3,4\}$ with $a \neq b \neq c \neq d$. For example, replacing $(a,b,c,d)$ in \{$1a11$,\ $1a12$,\ $1b21$,\ $1b22$,\ $2c11$,\ $2c12$,\ $2d21$,\ $2d22$\} by $(1,2,3,4)$ gives the independent set shown in gray in \fref{fig:graph}, which is a set of non-connected vertices of size $Z_{\text{tot}}=8$. The $4!=24$ distinct permutations of $(a,b,c,d)$ eventually result in $4!\times 4=96$ independent sets of size $Z_{\text{tot}}$ in total.

\subsection{Scheduling Solution}

Consider the conflict graph $\mathcal{G}(\mathcal{V},\mathcal{E})$ constructed above and let $\mathcal{I}$ be the set of all independent set of vertices of size $Z_{\text{tot}} = CBZ$. The following theorem characterises the solution of the optimization problem \eref{Original_optimization_problem}.

\begin{theorem}
The global optimal solution to the scheduling problem in multi-cloud network \eref{Original_optimization_problem} is the maximum-weight independent set among the independent sets of size $Z_{\text{tot}}$ in the conflict graph, where the weight of each vertex $v \in \mathcal{V}$ is given by:
\begin{align}
w(v) = \pi(v).
\label{eeeq2}
\end{align}
In other words, the optimal solution of the scheduling problem \eref{Original_optimization_problem} can be expressed as:
\begin{align}
I^* = \arg \max_{I \in \mathcal{I}} \sum_{v \in I} w(v).
\end{align}
\label{th1}
\end{theorem}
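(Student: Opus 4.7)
The plan is to prove Theorem~1 by establishing a weight-preserving bijection between the feasible points of problem \eref{Original_optimization_problem} and the independent sets of size $Z_{\text{tot}}$ in $\mathcal{G}(\mathcal{V},\mathcal{E})$. Since the weight $w(v)=\pi(v)=\pi_{cubz}$ exactly matches the coefficient of $X_{cubz}$ in the objective, such a bijection automatically transfers the maximum.

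For the forward direction, given any feasible $(X_{cubz},Y_{uz},Z_{cu})$, I would define $I=\{a=(c,u,b,z)\in\mathcal{A}\ :\ X_{cubz}=1\}\subseteq\mathcal{V}$. First, I verify independence by checking that none of CC1, CC2, CC3 can hold between two vertices of $I$: CC1 would produce two selected associations sharing a user across two distinct clouds, contradicting \eref{constraint1}--\eref{constraint2}; CC2 would produce two selected associations at the same $(c,b,z)$, contradicting \eref{constraint3}; and CC3 would produce two selected associations sharing a user on the same PZ across different base-stations, contradicting \eref{constraint4}. Second, the size is $|I|=Z_{\text{tot}}$, because \eref{constraint3} forces exactly one user per triple $(c,b,z)$, giving $CBZ$ selected associations.

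For the converse direction, given an independent set $I$ with $|I|=Z_{\text{tot}}=CBZ$, I would set $X_{cubz}=1$ iff $(c,u,b,z)\in I$ and define $Y_{uz}$ and $Z_{cu}$ by \eref{constraint1} and \eref{constraint4}. The main obstacle lies here, in recovering the equality in \eref{constraint3}: CC2 only forbids two vertices of $I$ at the same $(c,b,z)$, so independence alone yields at most one user per PZ, not exactly one. To upgrade ``at most'' to ``exactly,'' I use a pigeonhole argument on the $CBZ$ cells: since CC2 assigns each $v\in I$ to a unique triple $(\varphi_c(v),\varphi_b(v),\varphi_z(v))$ and there are exactly $CBZ$ such triples and $|I|=CBZ$ vertices, every triple must be hit exactly once, which is precisely \eref{constraint3}. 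The remaining constraints then follow directly: absence of CC1 edges yields \eref{constraint1}--\eref{constraint2}, and absence of CC3 edges yields \eref{constraint4}; binarity \eref{constraint5} is by construction.

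Finally, under this correspondence the objectives match term by term, since
\begin{align}
\sum_{c,u,b,z}\pi_{cubz}X_{cubz}\;=\;\sum_{v\in I}\pi(v)\;=\;\sum_{v\in I} w(v).
\end{align}
Hence maximizing the left-hand side over feasible $(X,Y,Z)$ is equivalent to maximizing the right-hand side over $I\in\mathcal{I}$, and the optimizer is $I^{*}=\arg\max_{I\in\mathcal{I}}\sum_{v\in I} w(v)$, which proves the theorem. I expect the only delicate point in the write-up to be the pigeonhole step that converts the cardinality constraint $|I|=Z_{\text{tot}}$ together with CC2 into the per-PZ equality \eref{constraint3}; the rest is a direct matching of the three connectivity conditions to the three system constraints C1--C3.
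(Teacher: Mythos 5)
Your proposal is correct and follows essentially the same route as the paper: a weight-preserving bijection between feasible schedules and independent sets of size $Z_{\text{tot}}$, with the three connectivity conditions CC1--CC3 matched to the three constraint groups; the paper merely inserts an intermediate reformulation over the power set of associations (its Lemma~3/\appref{app5}) before establishing the same bijection. Your explicit pigeonhole step recovering the equality in \eref{constraint3} from $|I|=CBZ$ together with CC2 is exactly the point the paper handles only implicitly through the claimed equivalence in that lemma, so making it explicit is a sound (and slightly more careful) choice.
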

\begin{proof}
A sketch of the proof goes as follows. The optimization problem \eref{Original_optimization_problem} is first reformulated as a search over the set of feasible schedules. Further, a one to one mapping between the possible schedules and the set of independent sets of size $Z_{tot}$ in the conflict graph is established. Finally, showing that the weight of each independent set is the objective function of \eref{Original_optimization_problem} indicates that the optimal solution is the maximum-weight independent set, which concludes the proof. A complete proof can be found in \appref{app1}.
\end{proof}

\subsection{Complexity Analysis and Heuristic Algorithm}

In graph theory context, an independent set is a set in which each two vertices are not adjacent. The maximum-weight independent set problem is the problem of finding, in a weighted graph, the independent set(s) with the maximum weight where the weight of the set is defined as the sum of the individual weights of vertices belonging to the set. Maximum-weight independent set problems are well-known NP-hard problems. However, they can be solved efficiently, e.g., \cite{15522856,2155446}. Therefore, the complexity of the proposed solution can be written as $\mathds{C}_{\text{cen}}^{\text{opt}} = \alpha^{CBZU}$, where $1 < \alpha \leq 2$ is a constant that depends on the applied algorithm, e.g., $\alpha=1.21$ for \cite{2155446}. Moreover, several approximate \cite{6848102} and polynomial time \cite{5341909} methods produce satisfactory results, in general. {This subsection presents a heuristic, yet simple, algorithm which discovers a maximal\footnote{{A maximal independent set is a set that is no longer independent if any node is added to it. The maximum independent set is the maximum of all such maximal sets.}} weight independent set.}

\begin{algorithm}[t]
\begin{algorithmic}
\REQUIRE $\mathcal{C}$, $\mathcal{U}$, $\mathcal{B}$, $\mathcal{Z}$, $P_{bz}$, and $h_{cbz}^{u},\ \forall \ c \in \mathcal{C}, u \in \mathcal{U},\ \forall \ b \in \mathcal{B},\ \forall\ z \in \mathcal{Z}$
\STATE Initialize $\mathbf{S} = \varnothing$.
\STATE Construct $\mathcal{G}$ using subsection III-A.
\STATE Compute weight $w(v), \ \forall \ v \in \mathcal{G}$ using \eref{eeeq2}.
\WHILE{$\mathcal{G} \neq \varnothing$}
\STATE Select $v^* = \text{argmax }_{v \in \mathcal{G}} w(v)$.
\STATE Set $\mathbf{S} = \mathbf{S} \cup \{v^*\}$
\STATE Set $\mathcal{G}= \mathcal{G}(v^*)$ where $ \mathcal{G}(v^*)$ is the sub-graph of $\mathcal{G}$ containing only the vertices not adjacent to $v^*$.
\ENDWHILE
\STATE Output $\mathbf{S}$.
\end{algorithmic}
\caption{Independent set search heuristic.}
\label{alg1}
\end{algorithm}

{To solve the maximum-weight independent set problem in linear time with the size of the graph, a simple procedure is to sequentially select nodes with largest weights. First, construct the graph $\mathcal{G}$. The idea here is to sequentially update the independent set $\mathbf{S}$ by adding the vertex with the highest weight at each step. Then, the graph is updated by removing all vertices adjacent to the selected vertex, so as to guarantee that the connectivity constraints CC1, CC2, and C3 are satisfied. The process is repeated until the graph becomes empty. The steps of the heuristic are summarized in \algref{alg1}}.

\section{Distributed Coordinated Scheduling} \label{sec:dis}

\begin{figure}[t]
\centering
\includegraphics[width=0.7\linewidth]{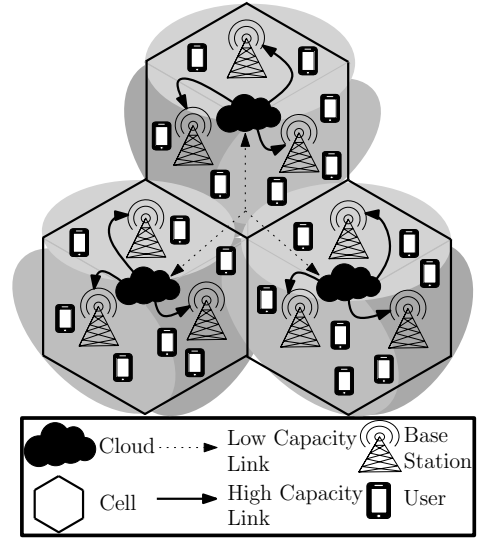}
\caption{Cloud enabled network composed $3$ cells, each containing $3$ base stations and $7$ users.}\label{fig:network2}
\end{figure}

The previous section assumes the presence of a central cloud that is responsible for computing the scheduling policy in a centralized fashion, which may not always be feasible from a computational complexity perspective. This section instead considers the multi-CRAN in which clouds are interconnected through low capacity links, as shown in \fref{fig:network2}. In contrast to the centralized system in \fref{fig:network1}, the joint scheduling is now performed under the constraint that each cloud $c \in \mathcal{C}$ has partial access to the network parameters. In particular, cloud $c \in \mathcal{C}$ has knowledge of the its channel gains only, i.e., $h_{cbz}^{u},\ \forall \ (u,b,z) \in \mathcal{U} \times \mathcal{B} \times \mathcal{Z}$. The distributed joint scheduling problem becomes the one of scheduling users to clouds and PZs in the connected BSs, by only allowing a reasonable amount of information exchange among the clouds.

\begin{remark}
Note that exchanging all the network parameters, i.e., $h_{cbz}^{u},\ \forall \ (c,u,b,z) \in \mathcal{C} \times \mathcal{U} \times \mathcal{B} \times \mathcal{Z}$, constructing the conflict graph at each cloud and solving the maximum-weight independent set may solve the optimization problem \eref{Original_optimization_problem}. However, such solution not only requires a considerable amount of backhaul communication (exchange of $CUBZ$ complex variable), but also a waste of computation resources as the $C$ clouds solve the same problem.
\end{remark}

The first part of this section provides the optimal distributed coordinated scheduling. In other words, this part characterizes the solution of the optimization problem \eref{Original_optimization_problem} by solving the maximum-weight independent set in the conflict graph in a distributed fashion. The second part of this section provides a heuristic, low complexity, distributed solution for problem \eref{Original_optimization_problem}.

\subsection{Optimal Distributed Coordinated Scheduling}

To solve the scheduling problem \eref{Original_optimization_problem} in a distributed fashion, the paper proposes a distributed method to resolve the maximum-weight independent set using the particular structure of the conflict graph. Before describing the steps of the algorithm, the section first introduces the local scheduling graph $\mathcal{G}_c (\mathcal{U}_c)$ for an arbitrary cloud $c \in \mathcal{C}$ and its set of scheduled users $c$, called $\mathcal{U}_c \subseteq \mathcal{U}$.

\begin{algorithm}[t]
\begin{algorithmic}
\REQUIRE $\mathcal{C}$, $\mathcal{U}$, $\mathcal{B}$, $\mathcal{Z}$, $P_{cbz}$, and $h_{cbz}^{u}$.
\STATE \% Initial Phase
\STATE Initialize $\mathcal{K} = \varnothing$.
\FORALL {$c \in \mathcal{C}$}
\STATE Initialize $\mathcal{U}_c = \mathcal{U}$.
\STATE Construct local conflict graph $\mathcal{G}_c (\mathcal{U}_c)$ and weights $w(v)$.
\STATE Solve $\mathbf{S}_c$ max-weight independent set of size $BZ$.
\ENDFOR
\STATE \% Conflict Resolution Phase
\FOR {$t=1,\ 2,\ \cdots$}
\STATE Broadcast $\hat{\mathbf{S}}_c = \left\{u \in \mathcal{U}_c \ | \ u \in \mathbf{S}_c\right\}$.
\STATE Set $\mathcal{K} = \left\{ u \in \mathcal{U} \ | \ \exists \ (c,c^\prime) \in \mathcal{C}^2, u \in \hat{\mathbf{S}}_c \cap \hat{\mathbf{S}}_{c^\prime}\right\}$.
\FORALL {$u \in \mathcal{K}$}
\STATE Set $\hat{\mathcal{C}}(u)= \left\{ c \in \mathcal{C} \ | \ u \in \hat{\mathbf{S}}_c \right\}$.
\FORALL {$c \in \hat{\mathcal{C}}(u)$}
\STATE Set $\pi_{cu} = \sum_{u^\prime,b,z} X_{cu^\prime bz} \pi_{cu^\prime bz}$.
\STATE Set $\mathcal{U}_c = \mathcal{U}_c \setminus \{u\}$.
\STATE Construct $\mathcal{G}_c (\mathcal{U}_c)$ and compute weights $w(v)$.
\STATE Solve $\overline{\mathbf{S}}_c$ max-weight independent set of size $BZ$.
\STATE Set $\overline{\pi}_{cu} = \sum_{u^\prime,b,z} X_{cu^\prime bz} \pi_{cu^\prime bz}$.
\STATE Broadcast $\pi_{c}$ and $\overline{\pi}_{c}$.
\ENDFOR
\STATE Set $c^* = \arg \max_{c \in \hat{\mathcal{C}}(u)} \left( \pi_{c} + \sum_{\substack{ c^\prime \in \hat{\mathcal{C}}(u) \\ c^\prime \neq c}} \overline{\pi}_{c^\prime } \right)$.
\STATE Set $\mathcal{U}_{c^*} = \mathcal{U}_{c^*} \cup \{u\}$.
\FORALL {$c \in \hat{\mathcal{C}}(u) \setminus \{c^*\}$}
\STATE Set $\mathbf{S}_c = \overline{\mathbf{S}}_c$
\ENDFOR
\ENDFOR
\ENDFOR
\STATE Output final schedule $\mathbf{S} = \bigcup_{c \in \mathcal{C}} \mathbf{S}_c$.
\end{algorithmic}
\caption{Distributed Coordinated Scheduling Algorithm}
\label{algo1}
\end{algorithm}

Let the reduced set of association of cloud $c$ be defined as $\hat{\mathcal{A}} = c \times \mathcal{U}_c \times \mathcal{B} \times \mathcal{Z}$. This set represents all associations cloud $c$ can perform when it is allowed to schedule users in the set $\mathcal{U}_c$. Note that all the benefits of the associations $\hat{a} \in \hat{\mathcal{A}}$ can be computed locally at cloud $c$ since all the needed complex channel gains $h_{cbz}^{u},\ \forall \ (u,b,z) \in \mathcal{U} \times \mathcal{B} \times \mathcal{Z}$ and power levels $P_{cbz}$, $\forall \ (b,z) \in \mathcal{B} \times \mathcal{Z}$ are locally available. The local conflict graph $\mathcal{G}_c (\mathcal{U}_c)$ is constructed in a similar manner as the conflict graph $\mathcal{G}$ except that it only considers associations $\hat{a} \in \hat{\mathcal{A}}$ in the vertex generation step. Hence, instead of containing $CUBZ$ vertex, the local conflict graph contains $U_cBZ$ where $U_c=|\mathcal{U}_c|$. The vertex connectivity conditions are the same as for the conflict graph.

The algorithm is composed of two phases, namely, the initialization and conflict resolution phases. In an initial phase, each cloud generates its local conflict graph and solves the maximum-weight independent set of size $BZ$. Each cloud communicates its scheduled users with the remaining clouds. It is worth mentioning that only the scheduled users are shared and not the complete information about the schedule (i.e., the PZs and the BSs in which they are scheduled).

After the initial phase, a conflict resolving step takes place. In this stage, users that are scheduled to multiple clouds are assigned to the cloud that generates the highest sum-benefit of scheduling that user across its multiple BSs and PZs. Clouds $c'$ that fail to have the maximum benefit are not allowed to schedule that user in the subsequent phases of the algorithm. Clouds $c'$, therefore, remove that user from their set of authorized users. The new graph is then constructed, and the maximum weight clique is subsequently solved. This process is repeated until all users are assigned to at most one single cloud. The steps of the algorithm are summarized in \algref{algo1}.

The following theorem characterizes the distributed solution reached by \algref{algo1}:
\begin{theorem}
\algref{algo1} converges to the optimal solution of the centralized coordinated scheduling optimization problem \eref{Original_optimization_problem} in at most $C(U-B)$ iterations.
\label{th2}
\end{theorem}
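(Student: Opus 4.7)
The plan is to prove three separate claims and combine them: (i) at termination, \algref{algo1} outputs a feasible schedule, i.e., an independent set of size $Z_{\text{tot}}$ in the centralized conflict graph $\mathcal{G}$; (ii) this schedule attains the centralized optimum of \thref{th1}; and (iii) the outer $t$-loop terminates within $C(U-B)$ iterations. For feasibility, the halting condition $\mathcal{K}=\varnothing$ means that no user appears simultaneously in two local schedules, so CC1 is enforced across clouds. Each $\mathbf{S}_c$ is, by construction, an independent set of size $BZ$ in $\mathcal{G}_c(\mathcal{U}_c)$, which handles CC2 and the intra-cloud instances of CC3. The only remaining concern, inter-cloud CC3, would require the same user in two different clouds, a configuration already forbidden by CC1; hence $\bigcup_c\mathbf{S}_c$ is an independent set of size $Z_{\text{tot}}$ in $\mathcal{G}$.

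For the iteration bound I would use the potential $\Phi^{t}=\sum_{c\in\mathcal{C}}|\mathcal{U}_c^{t}|$. Initially $\Phi^{0}=CU$; in any non-terminating outer iteration at least one conflicting user $u$ is resolved, removing $u$ from at least $|\hat{\mathcal{C}}(u)|-1\geq 1$ local user sets, so $\Phi$ strictly decreases. A lower bound on $\Phi$ follows from feasibility: under CC3 a user can occupy at most one slot per PZ index, so filling the $BZ$ associations of a cloud requires $|\mathcal{U}_c|\geq B$. Thus $\Phi\geq CB$ throughout, and the outer loop can run for at most $CU-CB=C(U-B)$ iterations.

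The main task, and the hardest step, is optimality. The structural observation is that the only inter-cloud edges of $\mathcal{G}$ stem from CC1, so once a valid user-to-cloud partition $\{\mathcal{U}_c\}$ is fixed, problem~\eref{Original_optimization_problem} decouples into $C$ independent local max-weight IS problems on $\{\mathcal{G}_c(\mathcal{U}_c)\}$. Writing $\mathrm{MWIS}(\mathcal{G}_c(\mathcal{U}_c))$ for the local optima, the centralized value is $W^{*}=\max_{\{\mathcal{U}_c\}\text{ valid}}\sum_c\mathrm{MWIS}(\mathcal{G}_c(\mathcal{U}_c))$. Initially the unconstrained sum $\sum_c\mathrm{MWIS}(\mathcal{G}_c(\mathcal{U}))$ upper-bounds $W^{*}$, while the final partition produces a value at most $W^{*}$; matching the two values amounts to showing that the greedy resolution loses no weight relative to the centralized problem.

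I would close this gap by induction on the number of resolved conflicts, maintaining the invariant that at every step some centralized optimal schedule remains consistent with the current $\{\mathcal{U}_c^{t}\}$. The criterion $c^{*}=\arg\max_{c\in\hat{\mathcal{C}}(u)}\bigl(\pi_{c}+\sum_{c'\neq c}\bar{\pi}_{c'}\bigr)$ is exactly the exchange of $u$ among the currently-claiming clouds that preserves the largest total weight, so it keeps an optimal schedule alive. The delicate point, which I expect to be the crux of the proof, is that removing $u$ from a cloud $c'\in\hat{\mathcal{C}}(u)\setminus\{c^{*}\}$ triggers a local re-optimization that may introduce a new user into $\mathbf{S}_{c'}$ and create a fresh conflict. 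Completing the induction requires showing that any such newly introduced user has strictly smaller marginal contribution than $u$ did to $c'$, so subsequent resolutions cannot undo the present choice and the invariant propagates all the way to termination.
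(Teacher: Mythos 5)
Your feasibility argument and your iteration count are essentially the paper's own: feasibility is exactly \lref{l3} (inter-cloud edges can only come from CC1, which the halting condition $\mathcal{K}=\varnothing$ rules out), and your pigeonhole observation that CC3 forces $|\mathcal{U}_c|\geq B$ per cloud, giving the bound $C(U-B)$, is precisely how the paper closes \appref{app2} (you even state the per-cloud lower bound more cleanly than the paper does). The decomposition insight --- that fixing a valid user-to-cloud partition decouples \eref{Original_optimization_problem} into $C$ independent local MWIS problems, so that the centralized optimum equals $\max_{\{\mathcal{U}_c\}}\sum_c \mathrm{MWIS}(\mathcal{G}_c(\mathcal{U}_c))$ --- is also the first half of the paper's optimality argument.

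The genuine gap is the one you flag yourself: you never prove that the greedy conflict resolution preserves an optimal user-to-cloud partition, and your proposed induction invariant is left resting on an unproven claim (``any newly introduced user has strictly smaller marginal contribution than $u$ did to $c'$''). That claim is not obviously true and is not the mechanism the paper uses. The paper instead argues by a global exchange/contradiction: suppose the algorithm assigns $u$ to cloud $c$ while the optimum assigns it to $c^*$; then the selection rule implies $\pi_{cu}+\overline{\pi}_{c^*u}\geq \pi_{c^*u}+\overline{\pi}_{cu}$, and chaining the bounds $\pi^*_{c}\leq\overline{\pi}_{cu}$, $\pi^*_{c^*u}\leq\pi_{c^*u}$, and $\overline{\pi}_{c^*u}\leq\pi_{c^*}$ shows that the schedule which instead gives $u$ to $c$ achieves at least $\pi^*$, so no weight is lost by the algorithm's choice. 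Note that this comparison is between the algorithm's final assignment and the centralized optimum directly; it sidesteps the step-by-step propagation issue (new users entering $\mathbf{S}_{c'}$ after re-optimization, fresh conflicts being resolved ``the wrong way'') that your induction would have to control. As written, your proposal establishes feasibility and the iteration bound but not the optimality claim, which is the substance of \thref{th2}; to complete it you would either need to prove your marginal-contribution lemma or switch to the exchange argument above.
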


\begin{proof}
To show that the distributed solution reached by \algref{algo1} is the optimal solution to the scheduling problem \eref{Original_optimization_problem}, we first show that solving the maximum-weight independent set locally yields the optimal solution to \eref{Original_optimization_problem}, whenever the local and the global approaches start with users assigned to the same cloud. Afterward, we show that \algref{algo1} assigns users to clouds that coincide with the cloud assignment found through the optimal solution. To finish the proof, we show that the running time of the algorithm is bounded. A complete proof of the theorem can be found in \appref{app2}.
\end{proof}

\subsection{Heuristic Distributed Coordinated Scheduling}

In this section, a heuristic, low-complexity, distributed solution is presented. The algorithm follows the same steps as \algref{algo1}, except in the way of updating the local maximum-weight independent set at each cloud. While \algref{algo1} recomputes the new graph and the maximum-weight independent set for each user in conflict, the proposed low-complexity distributed solution updates the solution obtained in the previous round. In other words, instead of generating the new graph and recomputing the maximum-weight independent set at each step, the heuristic algorithm updates the maximum-weight clique obtained previously by removing vertices in conflict and adding new vertices, which simplifies the computational complexity.

\begin{algorithm}[t]
\begin{algorithmic}
\REQUIRE $\mathcal{C}$, $\mathcal{U}$, $\mathcal{B}$, $\mathcal{Z}$, $P_{cbz}$, and $h_{cbz}^{u}$.
\STATE \% Initial Phase as in \algref{algo1}
\STATE \% Conflict Resolution Phase
\FOR {$t=1,\ 2,\ \cdots$}
\STATE Broadcast $\hat{\mathbf{S}}_c = \left\{u \in \mathcal{U}_c \ | \ u \in \mathbf{S}_c\right\}$.
\STATE Set $\mathcal{K} = \left\{ u \in \mathcal{U} \ | \ \exists \ (c,c^\prime) \in \mathcal{C}^2, u \in \hat{\mathbf{S}}_c \cap \hat{\mathbf{S}}_{c^\prime}\right\}$.
\FORALL {$u \in \mathcal{K}$}
\STATE Set $\hat{\mathcal{C}}(u)= \left\{ c \in \mathcal{C} \ | \ u \in \hat{\mathbf{S}}_c \right\}$.
\FORALL {$c \in \hat{\mathcal{C}}(u)$}
\STATE Broadcast $\pi_{cu} = \sum_{b,z} X_{cubz} \pi_{cubz}$.
\ENDFOR
\STATE Set $c^* = \arg \max_{c \in \hat{\mathcal{C}}(u)} \pi_{cu}$.
\FORALL {$c \in \mathcal{C} \setminus \{c^*\}$}
\STATE Set $\mathcal{U}_c = \mathcal{U}_c \setminus \{u\}$.
\STATE Compute $\mathbf{S}_c^u = \left\{ a \in \mathbf{S}_c \ | \ \varphi_u(a)=u \right\}$.
\STATE Set $\mathbf{S}_c = \mathbf{S}_c \setminus \mathbf{S}_c^u$ and $\mathcal{V}_c=\mathcal{V}_c(\mathbf{S}_c)$.
\STATE Compute weights $w(v), \ \forall \ v \in \mathcal{V}_c$.
\STATE Solve $\tilde{\mathbf{S}}_c$ max-weight independent set of size $|\mathbf{S}_c^u|$.
\STATE Set $\mathbf{S}_c = \mathbf{S}_c \cup \tilde{\mathbf{S}}_c$.
\ENDFOR
\ENDFOR
\ENDFOR
\STATE Output final schedule $\mathbf{S} = \bigcup_{c \in \mathcal{C}} \mathbf{S}_c$.
\end{algorithmic}
\caption{Low Complexity Distributed Scheduling Scheme}
\label{algo2}
\end{algorithm}

To explicitly define the maximum-weight independent set update strategy, first, define $\mathbf{S}_c$ as the schedule obtained by solving the maximum-weight independent set, and $\mathbf{S}_c^u \subset \mathbf{S}_c$ as the set of vertices of user $u$ scheduled in cloud $c$. Further, let $\mathcal{V}_c(\mathbf{S}_c)$ be the set of vertices in the local conflict graph $\mathcal{G}_c(\mathcal{U}_c)$ that are not connected to any vertex in $\mathbf{S}_c$. Note that the vertices in $\mathcal{V}_c(\mathbf{S}_c)$ are combinable with the previous schedule $\mathbf{S}_c$ since they are not connected to any vertex in the schedule.

The low-complexity distributed algorithm follows the same steps in the initial phase as \algref{algo1}. In the conflict resolution phase, users that are scheduled to multiple clouds are assigned to the one with the highest sum-benefit. The remaining clouds remove the associations containing the user, i.e., vertices in $\mathbf{S}_c^u$, from their schedule. Afterward, they update their local conflict graph to only keep the vertices $\mathcal{V}_c(\mathbf{S}_c)$ that are not connected to all vertices previously selected in the schedule $\mathbf{S}_c \setminus \mathbf{S}_c^u$. The maximum-weight independent set of size $|\mathbf{S}_c^u|$ is then computed and appended to $\mathbf{S}_c$ to produce the schedule. The process is repeated until all users are assigned to at most one single cloud. The steps of the algorithm are summarized in \algref{algo2}.

\begin{corollary}
\algref{algo2} converges to a feasible solution of the centralized coordinated scheduling optimization problem \eref{Original_optimization_problem} in at most $U$ iterations.
\label{cor1}
\end{corollary}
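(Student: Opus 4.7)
The plan is to mirror the argument of \thref{th2}, but only the feasibility half: since \algref{algo2} abandons the full re-solve of \algref{algo1}, I cannot and need not argue optimality, only that the output satisfies constraints C1, C2, C3 and that the outer loop terminates in at most $U$ rounds. I would split the argument into a per-cloud feasibility invariant on the local schedule $\mathbf{S}_c$ and a monotone-progress argument on the number of users that have been irrevocably assigned to a single cloud.

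For feasibility, I would maintain the invariant that after every update, $\mathbf{S}_c$ is an independent set of cardinality $BZ$ in the current local conflict graph $\mathcal{G}_c(\mathcal{U}_c)$. The initial phase establishes this by directly solving a local maximum-weight independent-set problem of the required size. During the conflict-resolution phase, removing $\mathbf{S}_c^u$ preserves independence while lowering the size by $|\mathbf{S}_c^u|$. Because $\mathcal{V}_c(\mathbf{S}_c)$ is defined as the set of vertices not adjacent to any element of the shrunken $\mathbf{S}_c$, appending any independent set $\tilde{\mathbf{S}}_c \subseteq \mathcal{V}_c(\mathbf{S}_c)$ keeps the result independent, and searching for one of size exactly $|\mathbf{S}_c^u|$ restores the cardinality to $BZ$. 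An argument identical to the one used in the proof of \thref{th1} then shows that any independent set of size $BZ$ in $\mathcal{G}_c(\mathcal{U}_c)$ covers each $(b,z)$ pair of cloud $c$ exactly once, enforcing C2 and C3 inside every cloud. Constraint C1 is delivered by the explicit removal step $\mathcal{U}_c = \mathcal{U}_c \setminus \{u\}$ applied to every losing cloud, so at termination no user appears in the schedule of more than one cloud.

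For termination, I would call a user \emph{resolved} once it belongs to only one of the sets $\mathcal{U}_c$. Since losing clouds permanently drop $u$ from $\mathcal{U}_c$ and the local conflict graph is restricted to $\mathcal{U}_c$, a resolved user cannot thereafter appear in any $\hat{\mathbf{S}}_{c'}$ other than that of its winning cloud, hence cannot re-enter $\mathcal{K}$. Every iteration with $\mathcal{K} \neq \varnothing$ processes at least one user in $\mathcal{K}$ and converts it from unresolved to resolved, so the number of iterations with non-empty conflict set is bounded by $U$.

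The main obstacle I anticipate is justifying that the incremental replacement step always yields an independent set of the required size $|\mathbf{S}_c^u|$ inside $\mathcal{V}_c(\mathbf{S}_c)$, since \algref{algo2} never refreshes the unscathed portion $\mathbf{S}_c \setminus \mathbf{S}_c^u$. I would handle this by appealing to the structural fact that each $(b,z)$ slot of $\mathcal{G}_c(\mathcal{U}_c)$ is represented by $|\mathcal{U}_c|$ user-vertices and that only the CC1--CC3 edges constrain their simultaneous selection; as long as $|\mathcal{U}_c|$ stays above the feasibility threshold (an implicit standing assumption of the problem setup), a valid replacement compatible with $\mathbf{S}_c \setminus \mathbf{S}_c^u$ always exists, which is what keeps the invariant alive across rounds.
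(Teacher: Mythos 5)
Your proposal is correct and follows essentially the same route as the paper: termination in at most $U$ rounds because every iteration with $\mathcal{K}\neq\varnothing$ permanently strips a conflicted user from the losing clouds' sets $\mathcal{U}_c$, followed by feasibility of $\mathbf{S}=\bigcup_c \mathbf{S}_c$ from the fact that each $\mathbf{S}_c$ is a size-$BZ$ independent set in its local conflict graph and the final user sets are pairwise disjoint (the paper packages this last step as its Lemma~3 on merging local independent sets). The one place you go beyond the paper is in explicitly maintaining the invariant that the incremental update $\mathbf{S}_c\setminus\mathbf{S}_c^u$ plus a size-$|\mathbf{S}_c^u|$ independent set drawn from $\mathcal{V}_c(\mathbf{S}_c)$ restores a valid size-$BZ$ local schedule; the paper simply asserts that the algorithm ``outputs the independent sets $I_c\in\mathcal{I}_c$'' without checking that the replacement always exists, so your worry is legitimate, and it is resolved exactly as you suggest: each vacated slot $(b,z)$ admits any user of $\mathcal{U}_c$ not already occupying PZ $z$ in another BS, which is nonempty as long as $|\mathcal{U}_c|\geq B$ --- the same standing assumption $U\geq B$ the paper invokes at the end of its proof of Theorem~2.
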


\begin{proof}
To prove this corollary, it is sufficient to show that \algref{algo2} converges. Showing that the outputted schedule is a feasible one concludes the proof. A complete proof can be found in \appref{app3}.
\end{proof}

\subsection{Complexity Analysis}

This subsection compares the complexity of the optimal and heuristic distributed algorithms against the optimal centralized solution proposed in \sref{sec:mul}.

As shown in \algref{algo1}, each cloud solves a maximum weight independent set at each step of the algorithm in which it has conflicts with other clouds. The size of the scheduling graph of the $c$-th cloud is $CBZ|\mathcal{U}_c|$. Given the result in \thref{th2}, each cloud experiences $U-B$ conflict in the worst case. Therefore, the total complexity $\mathds{C}_{\text{dis}}^{\text{opt}}$ per cloud can be written as:
\begin{align}
\mathds{C}_{\text{dis}}^{\text{opt}} &= \sum_{|\mathcal{U}_c| = U-B}^U \alpha^{CBZ|\mathcal{U}_c|} = \alpha^{CBZU}\cfrac{\alpha^{-CB^2Z} - \alpha^{CBZ}}{1-\alpha^{CBZ}} \nonumber \\
&= \mathds{C}_{\text{cen}}^{\text{opt}} \cfrac{\alpha^{-CB^2Z} - \alpha^{CBZ}}{1-\alpha^{CBZ}} \label{eq:com}
\end{align}
where $1 < \alpha \leq 2$ is a constant that depends on the algorithm utilized in solving the maximum weight independent set problem. It can readily be seen from \eref{eq:com} that the complexity of the distributed solution approaches the complexity of the centralized one as the number of PZs increases, since there are more scheduling opportunities, and hence less conflicts.

The analysis of the low complexity distributed algorithm follows the same lines as the optimal one, except that the total number of conflicts experienced by \emph{all clouds} is bounded by $U$ according to \corref{cor1}. Therefore, assuming each cloud experience $\lfloor U/C \rfloor$ conflicts, the complexity \emph{per cloud} can be experienced as follows:
\begin{align}
\mathds{C}_{\text{dis}}^{\text{heu}} &= \sum_{|\mathcal{U}_c| = \lfloor U/C \rfloor}^U \alpha^{CBZ|\mathcal{U}_c|} = \alpha^{CBZU}\cfrac{\alpha - \alpha^{BZU(1-C)}}{1-\alpha^{CBZ}} \nonumber \\
&= \mathds{C}_{\text{cen}}^{\text{opt}} \cfrac{\alpha - \alpha^{BZU(1-C)}}{1-\alpha^{CBZ}} \label{eq:com2}
\end{align}
where $1 < \alpha \leq 2$ is a constant that depends on the utilized algorithm. The relative gain in complexity between the optimal and heuristic solutions is given by the following expression:
\begin{align}
\cfrac{\mathds{C}_{\text{dis}}^{\text{heu}}}{\mathds{C}_{\text{dis}}^{\text{opt}}} &= \cfrac{\alpha - \alpha^{BZU(1-C)}}{\alpha^{-CB^2Z} - \alpha^{CBZ}} \ \underset{C \rightarrow \infty}{\longrightarrow} 0 \label{eq:com3}
\end{align}

The above limit shows how the complexity of the heuristic distributed algorithm is negligible as compared to the optimal distributed solution for a large number of clouds. Such complexity simplification comes, however, at the expense of a degradation in the performance as the number of clouds increases, as the simulations section suggests later.

\section{Extremes in Coordination Schemes}\label{sec:ful}

The two extremes in coordination schemes are presented in this section. The fully coordinated system, also known as the signal-level coordinated system, requires a substantial amount of backhaul communication to share all the data streams between the BSs. On the other hand, scheduling-level coordination requires low capacity links to connect all BSs to clouds, as clouds become responsible for determining the scheduling policy of the network only. Although more practical to implement from backhaul requirements perspective, scheduling-level coordination comes at the expense of performance degradation. This section considers the two scheduling policy extremes, i.e., either scheduling-level or signal-level coordination problems. {These two allocation problems are separately considered in the literature, e.g., \cite{6799231,6588350,6786060,6525475,6811617,117665}. This part next shows that the proposed graph theoretical framework developed earlier in this paper can be alternatively used to globally solve the problems. In other words, the scheduling problem in each case can be solved using similar techniques to the one used in solving the original hybrid scheduling problem.}

\subsection{Signal-Level Coordination}

For signal-level coordinated systems, all the data streams of users are shared among the BSs across the network. Hence, a user can be scheduled to many BSs in different clouds. The scheduling problem becomes the one of assigning users to clouds and scheduling them to PZs in each BS frame under the following practical constraints.
\begin{itemize}
\item Each PZ should be allocated to exactly one user.
\item Each user cannot be served by the same PZ across different BSs.
\end{itemize}

Following an analysis similar to the one in \sref{sec:mul}, the scheduling problem can be formulated as a 0-1 mixed integer programming as follows:
\begin{subequations}
\label{full_problem}
\begin{align}
\max& \sum_{c,u,b,z}\pi_{cubz}X_{cubz} \\
\label{constraint6} {\rm s.t.\ }& \sum_{u}X_{cubz}=1, \quad\forall \ (c,b,z)\in \mathcal{C} \times \mathcal{B} \times \mathcal{Z},\\
\label{constraint7}& \sum_{cb} X_{cubz} \leq 1,\quad \forall \ (u,z) \in \mathcal{U} \times \mathcal{Z}, \\
\label{constraint8}& X_{cubz} \in \{0,1\}, \forall \ (c,u,b,z) \in \mathcal{C} \times \mathcal{U}\times \mathcal{B}\times\mathcal{Z},
\end{align}
\end{subequations}
where the optimization is over the binary variable $X_{cubz}$, and where equations \eref{constraint6} and \eref{constraint7} correspond to the first and second system constraints, respectively.

Construct a graph similar to the one constructed in \sref{sec:mul}, except using the connectivity constraints CC2 and CC3 only. Such graph, denoted by $\mathcal{G'}(\mathcal{V'},\mathcal{E'})$, is called here the \textit{reduced conflict graph}. The following lemma provides the optimal solution to the optimization problem \eref{full_problem}.

\begin{lemma}
The optimal solution to the scheduling problem in signal-level coordinated cloud-enabled network \eref{full_problem} is the maximum-weight independent set of size $CBZ$ in the reduced conflict graph which is constructed in a similar manner as the conflict graph but using only connectivity constraint CC2 and CC3.
\label{l1}
\end{lemma}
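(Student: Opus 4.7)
The plan is to mirror the proof of \thref{th1}, specialized to the signal-level coordinated setting in which constraint C1 (and its graph-theoretic counterpart CC1) has been dropped. First, I would recast the 0-1 program \eref{full_problem} as a search over the set of feasible schedules, where a feasible schedule is any binary assignment $X_{cubz}$ satisfying \eref{constraint6} and \eref{constraint7}. To each such schedule I associate the subset of vertices of $\mathcal{G'}$ defined by $\{v \in \mathcal{V'} \mid X_{\varphi_c(v)\varphi_u(v)\varphi_b(v)\varphi_z(v)} = 1\}$, and conversely to each vertex set a candidate schedule defined by the same indicator rule.

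Second, I would establish a one-to-one correspondence between feasible schedules and independent sets of size $CBZ$ in $\mathcal{G'}$. In the forward direction, \eref{constraint6} forces exactly $CBZ$ vertices to be selected (one per $(c,b,z)$ triple), CC2 is respected because distinct selected vertices cannot share a common $(c,b,z)$, and CC3 is respected because \eref{constraint7} forbids the same user from being scheduled to the same PZ in two different BSs. In the reverse direction, any independent set $I$ of size $CBZ$ in $\mathcal{G'}$ must cover every $(c,b,z)$ triple exactly once by a counting argument based on CC2 (at most one vertex per triple, together with $|I|=CBZ$ triples), which yields \eref{constraint6}, while CC3 immediately yields \eref{constraint7}; setting $X_{cubz}=1$ iff the corresponding vertex lies in $I$ therefore produces a feasible schedule.

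Third, since $w(v) = \pi(v) = \pi_{cubz}$, the total weight of any independent set of size $CBZ$ coincides with the objective $\sum_{c,u,b,z} \pi_{cubz} X_{cubz}$ of \eref{full_problem} evaluated at the associated schedule. Maximizing over independent sets of size $CBZ$ in $\mathcal{G'}$ is thus equivalent to maximizing \eref{full_problem} over feasible schedules, which is the content of the lemma.

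The only delicate point, and the one worth emphasizing in the writeup, is that CC2 alone encodes ``at most one user per PZ'' rather than ``exactly one''; the cardinality restriction $|I|=CBZ$ is precisely what closes the gap and promotes the inequality to an equality. This is the same phenomenon that appears in the proof of \thref{th1}, so once CC1 and the associated cloud-assignment bookkeeping are removed, the argument specializes cleanly without additional obstacles. A complete proof may be relegated to an appendix by direct analogy with \appref{app1}.
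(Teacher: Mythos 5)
Your proposal is correct and follows essentially the same route as the paper's proof in \appref{app4}: reformulate \eref{full_problem} as a search over feasible schedules, exhibit the one-to-one correspondence with independent sets of size $CBZ$ in the reduced conflict graph (with the cardinality constraint upgrading the ``at most one user per PZ'' encoded by CC2 to the required equality), and identify the set weight with the objective. No substantive differences to report.
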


\begin{proof}
A sketch of the proof goes as follows. The constraints \eref{constraint6}, \eref{constraint7} and \eref{constraint8} of the optimization problem \eref{full_problem} are similar to constraints \eref{constraint3}, \eref{constraint4} and \eref{constraint5}, respectively. Therefore, this lemma can be proved using similar steps of \thref{th1}, except by considering the \textit{reduced conflict graph} $\mathcal{G'}(\mathcal{V'},\mathcal{E'})$ only. A complete proof can be found in \appref{app4}.
\end{proof}

\subsection{Scheduling-Level Coordination}

In scheduling-level coordinated CRAN, the cloud is only responsible for scheduling users to BSs and PZs and synchronizing the transmit frames across the various BSs. In such coordinated systems, the scheduling problem is the one of assigning users to BSs and PZs under the following system constraints:
\begin{itemize}
\item Each user can connect at most to one BS but possibly to many PZs in that BS.
\item Each PZ should be allocated to exactly one user.
\end{itemize}

The scheduling problem can, then, be formulated as follows:
\begin{subequations}
\label{no_problem}
\begin{align}
\max & \sum_{c,u,b,z}\pi_{cubz}X_{cubz} \\
\label{constraint12}{\rm s.t.\ } &Y_{cub} = \min\bigg(\sum_{z}X_{cubz},1\bigg), \quad \forall \ (c,u,b) ,\\
\label{constraint11}& \sum_{c,b}Y_{cub}\leq 1,\quad \forall \ u\in\mathcal{U},\\
\label{constraint13}& \sum_{u}X_{cubz}=1, \quad \forall \ (c,b,z)\in \mathcal{C} \times \mathcal{B}\times\mathcal{Z},\\
\label{constraint14}& X_{cubz},Y_{cub} \in \{0,1\}, \quad \forall \ (c,u,b,z),
\end{align}
\end{subequations}
where the optimization is over the binary variables $X_{cubz}$ and $Y_{cub}$, where the constraints in \eref{constraint12} and \eref{constraint11} correspond to first system constraint, and where the equality constraint in \eref{constraint13} corresponds to the second system constraint.

Construct the scheduling conflict graph $\mathcal{G''}(\mathcal{V''},\mathcal{E''})$ by generating a vertex $v \in \mathcal{V''}$ for each association $a \in \mathcal{A}$. Vertices $v$ and $v^\prime$ are conflicting vertices, and thus connected by an edge in $\mathcal{E''}$ if one of the following connectivity conditions is true:
\begin{itemize}
\item $\delta(\varphi_u(v) - \varphi_u(v^\prime))(1-\delta(\varphi_c(v) - \varphi_c(v^\prime))) = 1$.
\item $(\varphi_c(v),\varphi_b(v),\varphi_z(v)) = (\varphi_c(v^\prime),\varphi_b(v^\prime),\varphi_z(v^\prime))$.
\item $\delta(\varphi_u(v) - \varphi_u(v^\prime))(1-\delta(\varphi_b(v) - \varphi_b(v^\prime))) = 1$.
\end{itemize}

The following proposition characterizes the solution of the scheduling problem in scheduling-level coordinated CRANs:
\begin{proposition}
The optimal solution to the optimization problem \eref{no_problem} is the maximum-weight independent set of size $CBZ$ in the scheduling conflict graph.
\end{proposition}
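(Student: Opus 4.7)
The plan is to mirror the proof strategies of \thref{th1} and \lref{l1}, establishing a weight-preserving bijection between feasible schedules of \eref{no_problem} and independent sets of size $CBZ$ in the scheduling conflict graph $\mathcal{G''}$. First, I would recast \eref{no_problem} as a maximization over the set of feasible schedules $S \subseteq \mathcal{A}$, where a schedule is feasible if (i) exactly one association is selected per triple $(c,b,z)$, and (ii) each user appears only in associations sharing a common $(c,b)$ pair. Setting $X_{cubz}$ as the indicator of $(c,u,b,z)\in S$ recovers the binary variables of \eref{no_problem} and identifies the objective with $\sum_{a\in S}\pi(a)$.

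Next, I would verify the correspondence between feasible schedules and independent sets of cardinality $CBZ$. Connectivity condition~2 forbids two selected associations from sharing a $(c,b,z)$ triple, so any independent set contains at most one association per PZ; combined with the prescribed size $CBZ$, which is the total number of PZs, each PZ is served by exactly one association, giving \eref{constraint13}. Connectivity conditions~1 and 3 together rule out any pair of selected associations on the same user unless both the cloud index and the BS index coincide, which is precisely the single-BS-per-user requirement \eref{constraint12}--\eref{constraint11}. Conversely, every feasible schedule yields a vertex set on which none of the three conditions can trigger, hence an independent set of size $CBZ$. Since the bijection preserves the additive weight $\sum_{v\in I}\pi(v)$, the maximum-weight independent set of size $CBZ$ in $\mathcal{G''}$ is the global optimum of \eref{no_problem}.

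The main technical point to check carefully is that conditions~1 and 3, jointly, correctly encode the single-BS-per-user constraint. Condition~3 alone prevents a user from being scheduled to associations with distinct BS indices, but because BS indices in $\mathcal{B}$ are local labels shared across clouds, it does not by itself exclude same-user associations sitting in two different clouds but sharing a BS label; condition~1 closes this gap by directly forbidding the same user from appearing in two clouds. The combined effect is that any two selected associations of a given user must agree on both $c$ and $b$, so the user is served by a single BS of a single cloud, as required by \eref{constraint12}--\eref{constraint11}. Once this verification is in place, the remainder of the argument is a routine adaptation of the proof of \thref{th1}, with connectivity condition CC1 of the hybrid graph being split into the two independent user-exclusion conditions used here.
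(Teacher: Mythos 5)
Your proposal is correct and follows exactly the route the paper intends: the paper omits this proof, stating only that it mirrors the steps of Theorem~1, and your weight-preserving bijection between feasible schedules and independent sets of size $CBZ$ in $\mathcal{G''}$ is precisely that adaptation. Your added verification---that the third connectivity condition alone is insufficient because BS labels in $\mathcal{B}$ are reused across clouds, so the first condition is needed to force same-user associations to agree on the cloud as well as the BS---is a detail the paper glosses over but that you handle correctly.
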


The proof of this result is omitted as it mirrors the steps used in proving \thref{th1}.

\section{Simulation Results}\label{sec:sim}

\begin{table}
\centering
\caption{System model parameters}
\begin{tabular}{|c|c|}
\hline
Cellular Layout & Hexagonal \\
\hline
Cell-to-Cell Distance & 500 meters \\
\hline
{Channel Model} & SUI-3 Terrain type B \\
\hline
Channel Estimation & Perfect \\
\hline
High Power & -42.60 dBm/Hz \\
\hline
Background Noise Power & -168.60 dBm/Hz \\
\hline
SINR Gap $\Gamma$ & 0dB\\
\hline
Bandwidth & 10 MHz \\
\hline
\end{tabular}
\label{t1}
\end{table}

The performance of the proposed scheduling schemes is shown in this section in the downlink of a cloud-radio access network, similar to \fref{fig:network2}. For illustration purposes, the simulations focus on the sum-rate maximization problem, i.e., $\pi_{cubz} = \log_2 (1 + \text{SINR}^u_{cbz})$. In these simulations, the cell size is set to $500$ meters and users are uniformly placed within each cell. The number of clouds, users, base-stations per cloud and power-zone per base-station frame change in each figure in order to quantify the gain in various scenarios. Simulations parameters are displayed in Table \ref{t1}. {It is crucial to highlight that both the centralized and the distributed optimal algorithms described in Section III and Section IV.A provide the exact same solution denoted by ``Hybrid-level coordination".}

\begin{figure}[t]
\centering
\includegraphics[width=0.9\linewidth]{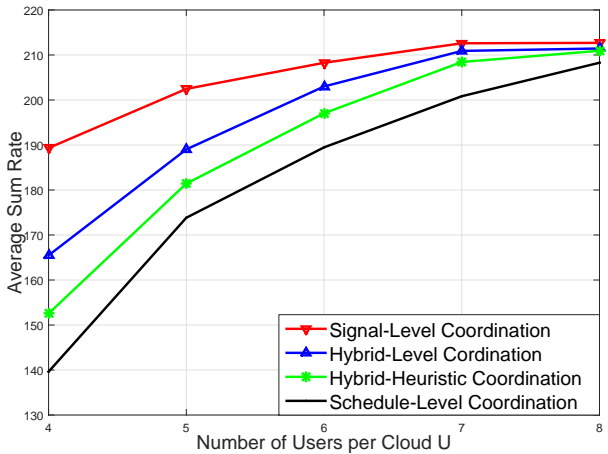}\\
\caption{Sum-rate in bps/Hz versus number of users $U$. Number of clouds is $C=3$ with $B=3$ base-stations per cloud, and $Z=5$ power-zones per BS's transmit frame.}\label{fig:U}
\end{figure}

\begin{figure}[t]
\centering
\includegraphics[width=0.9\linewidth]{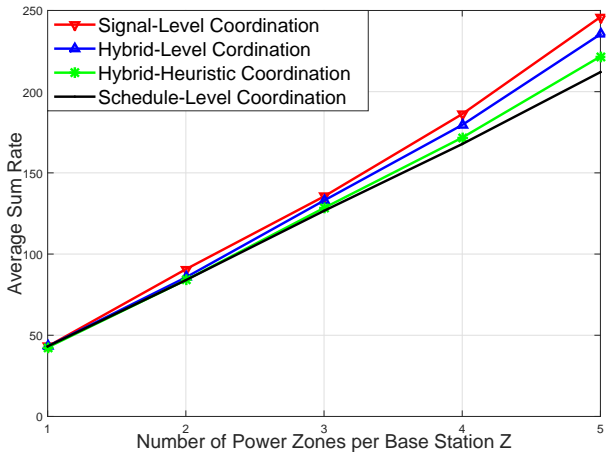}\\
\caption{Sum-rate in bps/Hz versus number of power-zones $Z$ per BS. Number of clouds is $C=3$ with $B=3$ base-stations per cloud, and $U=24$ users.}\label{fig:Z}
\end{figure}

\begin{figure}[t]
\centering
\includegraphics[width=0.9\linewidth]{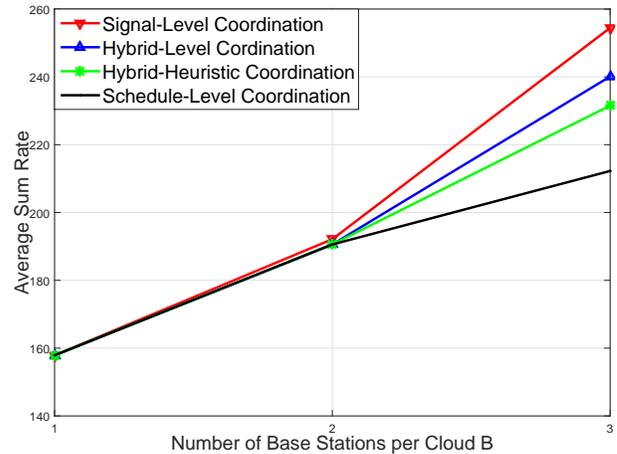}\\
\caption{Sum-rate in bps/Hz versus number of base-stations $B$ per cloud. Number of clouds is $C=3$ with $Z=5$ power-zones per BS's transmit frame, and $U=24$ users.}\label{fig:B}
\end{figure}

\begin{figure}[t]
\centering
\includegraphics[width=0.9\linewidth]{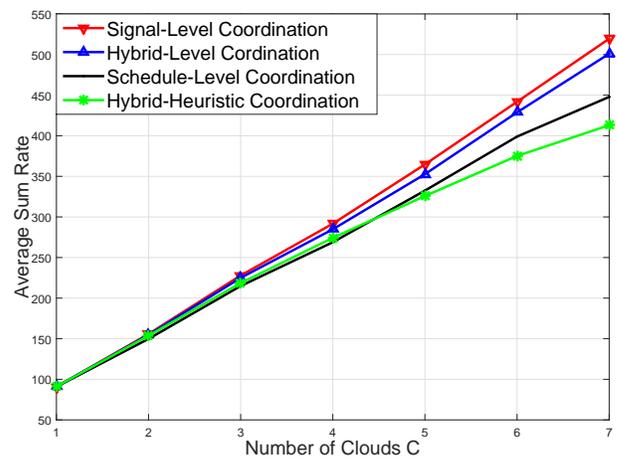}\\
\caption{Sum-rate in bps/Hz versus number of clouds $C$. Number of base-stations is $B=3$ per cloud, with $5$ power-zones per BS's transmit frame, and $U=8$ users per cloud.}\label{fig:C}
\end{figure}

\fref{fig:U} plots the sum-rate in bps/Hz versus the number of users $U$ for a CRAN composed of $C=3$ clouds, $B=3$ base-stations per cloud, and $Z=5$ power-zones per BS's transmit frame. The proposed hybrid coordination policy provides a significant gain against the scheduling-level coordinated system for a small number of users. As the number of users increases in the system, the different strategies performs the same. This can be explained by the fact that as the number of users in the network increases, the probability that different users have the maximum pay-off in various PZs across the network increases, which results in scheduling different users in different PZs and thus the different scheduling policies provide similar performance. The performance of the distributed heuristic approaches the one of the optimal scheduling as the number of users increases. This can be explained by the fact that for a large number of users, the probability that a user is scheduled to more than one cloud decreases, which decreases the conflict among clouds and the likelihood of scheduling user to the wrong cloud.

\fref{fig:Z} plots the sum-rate in bps/Hz versus the number of power-zones $Z$ per BS for a network comprising $C=3$ clouds, $B=3$ base-stations, and $U=24$ users. From the system connectivity of the different policies, we clearly see that for a network comprising only one PZ per BS, the three scheduling policies are equivalent, which explains the similar performance for $Z=1$. As the number of PZs per BS increases, the gap between the different coordinated systems increases. In fact, as the number of PZs increases, the ratio of users per PZ decreases and thus the role of the cloud as a scheduling entity becomes more pronounced.

\fref{fig:B} plots sum-rate in bps/Hz versus the number of base-stations $B$ per cloud for a network comprising $C=3$ clouds, $Z=5$ power-zones per BS's transmit frame, and $U=24$ users. For a small number of BSs, all the policies are equivalent and provide the same gain. However, as this number increases, the higher the level of coordination is, the more scheduling opportunities it offers. This explains the difference in performance as $B$ increases. We can see that our hybrid coordination provides a gain up to $13\%$ as compared to the scheduling-level coordinated network, for a degradation up to $6\%$ as compared to the signal-level coordination.

Finally, \fref{fig:C} plots the sum-rate in bps/Hz versus the number of clouds $C$ for a network comprising $B=3$ base-stations per cloud, $Z=5$ power-zones per BS's transmit frame, and $U=8$ users \emph{per cloud}. Again, our hybrid coordination provides a gain up to $12\%$ as compared with the scheduling-level coordination, for a negligible degradation up to $4\%$ against the signal-level coordinated system. For a large number of clouds, the performance of the distributed heuristic degrades. This can be explained by the fact that for a large number of clouds, the probability that multiple clouds are in conflict for the same user increases, which increases the probability of scheduling users to the wrong cloud; thereby resulting in a performance degradation.

\section{Conclusions}\label{sec:con}

This paper considers the hybrid scheduling problem in the downlink of a multi-cloud radio-access network. The paper maximizes a network-wide utility under the practical constraint that each user is scheduled, at most, to a single cloud, but possibly to many BSs within the cloud and can be served by one or more distinct PZs within the BSs frame. The paper proposes a graph theoretical approach to solving the problem by introducing the conflict graph in which each vertex represents an association of cloud, user, BS and PZ. The problem is then reformulated as a maximum-weight independent set problem that can be efficiently solved. The paper further proposes distributed optimal and heuristic solutions to the coordinated scheduling problem. Finally, the paper shows that the optimal solution to the scheduling problem in different levels of system coordination can be obtained as a special case of the more general proposed system.  Simulation results suggest that the proposed system architecture provides appreciable gain as compared to the scheduling-level coordinated networks, for a negligible degradation against the signal-level coordination.

\appendices
\numberwithin{equation}{section}

\section{Proof of \thref{th1}}\label{app1}

To prove the result, the optimization problem \eref{Original_optimization_problem} is first reformulated as a search over the set of feasible schedules. Further, a one to one mapping between the possible schedules and the set of independent sets of size $Z_{tot}$ in the conflict graph is highlighted. Showing that the weight of each independent set is the objective function of \eref{Original_optimization_problem} indicates that the optimal solution is the maximum-weight independent set which concludes the proof.

All possible schedules representing the assignments between clouds, users, BSs and PZs, regardless of the feasibility, can be conveniently represented by the set of all subsets of $\mathcal{A}$, i.e., the power set $\mathcal{P}(\mathcal{A})$ of the set of associations $\mathcal{A}$. Recall that for an association $a=(c,u,b,z)$ in a schedule $\mathcal{S} \subseteq \mathcal{A}$ (i.e., $\mathcal{S} \in \mathcal{P}(\mathcal{A})$), the benefit of the association is given by $\pi(a)=\pi_{cubz}$. The following lemma reformulates the multi-cloud joint scheduling problem.
\begin{lemma}
The discrete optimization problem \eref{Original_optimization_problem} can be written as follows:
\begin{align}
\label{new_optimization_problem}
&\max_{\mathcal{S} \in \mathcal{P}(\mathcal{A})} \sum_{a \in \mathcal{S}} \pi(a) \\
&{\rm s.t.\ } \mathcal{S} \in \mathcal{F},
\end{align}
where $\mathcal{F}$ is the set of feasible schedules defined as follows:
\begin{subequations}
\begin{align}
&\mathcal{F}= \{ \mathcal{S} \in \mathcal{P}(\mathcal{A}) \text{ such that } \forall \ a \neq a^\prime \in \nonumber \mathcal{S}\\
& \delta(\varphi_u(a) - \varphi_u(a^\prime)) (1-(\delta(\varphi_c(a)-\varphi_c(a^\prime))) = 0, \label{eq1} \\
& (\varphi_c(a),\varphi_b(a),\varphi_z(a)) \neq (\varphi_c(a^\prime),\varphi_b(a^\prime),\varphi_z(a^\prime)) \label{eq2}, \\
& \delta(\varphi_u(a) - \varphi_u(a^\prime))\delta(\varphi_z(a) - \varphi_z(a^\prime)) =0 \label{eq3} \\
&|\mathcal{S}| = Z_{\text{tot}} \label{eq4} \}.
\end{align}
\end{subequations}
\label{l2}
\end{lemma}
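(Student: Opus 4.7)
The plan is to exhibit a bijection between feasible points of the 0-1 program \eref{Original_optimization_problem} and the feasible set $\mathcal{F}$, and to verify that the objective values agree under this bijection. The natural map sends a schedule $\mathcal{S} \subseteq \mathcal{A}$ to the binary vector defined by $X_{cubz}=1$ iff $(c,u,b,z) \in \mathcal{S}$, and sets $Y_{uz}$ and $Z_{cu}$ according to the equalities \eref{constraint1} and \eref{constraint4}. Since those two equalities are purely definitional (they express the auxiliary variables as functions of $X$), the map is clearly well-defined and invertible once the remaining constraints are shown to translate correctly.

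The second step is to translate each of the remaining constraints one at a time. Constraints \eref{constraint1}--\eref{constraint2} together say that each user $u$ is served by at most one cloud; in the schedule description, this forbids two associations in $\mathcal{S}$ from sharing the same user index while having distinct cloud indices, which is precisely \eref{eq1}. Constraint \eref{constraint4}, stipulating $\sum_{cb} X_{cubz} \leq 1$, forbids two associations in $\mathcal{S}$ from sharing the same user and PZ, which is precisely \eref{eq3}. Constraint \eref{constraint3}, $\sum_u X_{cubz} = 1$, breaks into two parts: the ``at most one user per PZ'' part corresponds to \eref{eq2}, while the ``exactly one'' part, combined with \eref{eq2}, corresponds to the cardinality requirement \eref{eq4}. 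Finally, the identity $\sum_{c,u,b,z}\pi_{cubz} X_{cubz} = \sum_{a \in \mathcal{S}} \pi(a)$ is immediate from the definition of $\pi(a)$, so the objective values match and the two problems are equivalent.

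The only step that requires more than one-line reasoning is the equivalence between \eref{constraint3} and the pair \eref{eq2}+\eref{eq4}. The direction \eref{constraint3} $\Rightarrow$ \eref{eq2}+\eref{eq4} is immediate because each of the $Z_{\text{tot}}=CBZ$ triples $(c,b,z)$ contributes exactly one association to $\mathcal{S}$. For the converse, \eref{eq2} alone only guarantees that at most one association in $\mathcal{S}$ has each triple $(c,b,z)$, so $|\mathcal{S}| \leq Z_{\text{tot}}$; the cardinality constraint $|\mathcal{S}| = Z_{\text{tot}}$ then forces equality at every $(c,b,z)$ by a pigeonhole argument. I expect this pigeonhole step to be the main (and only) obstacle; all other translations are routine bookkeeping that can be dispatched by direct inspection.
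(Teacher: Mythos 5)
Your proposal is correct and follows essentially the same route as the paper's proof in Appendix~\ref{app5}: a constraint-by-constraint translation in which \eref{constraint1}--\eref{constraint2} map to \eref{eq1}, \eref{constraint4} maps to \eref{eq3}, and \eref{constraint3} maps to the pair \eref{eq2} and \eref{eq4}, with the objective identity being immediate. The paper organizes the same counting argument via partial schedules $\mathcal{S}_{cu}$, $\mathcal{S}_{cbz}$, $\mathcal{S}_{uz}$ rather than an explicit pigeonhole statement, but the substance is identical.
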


\begin{proof}
The proof can be found in \appref{app5}.
\end{proof}

To demonstrate that there is a one to one mapping between the set of feasible schedules $\mathcal{F}$ and the set of independent sets $\mathcal{I}$ of size $Z_{\text{tot}}$, we first show that each element of $\mathcal{F}$ is represented by a unique element in $\mathcal{I}$. We, then, show that each independent set can uniquely be represented by a feasible schedule.

Let the feasible schedule $\mathcal{S} \in \mathcal{F}$ be associated with the set of vertices $I$ in the conflict graph. Assume $\exists \ v \neq v^\prime \in I$ such that $v$ and $v^\prime$ are connected. From the connectivity conditions in the conflict graph, vertices $v$ and $v^\prime$ verify one of the following conditions
\begin{itemize}
\item CC1: $\delta(\varphi_u(v) - \varphi_u(v^\prime))(1-\delta(\varphi_c(v) - \varphi_c(v^\prime))) = 1$: this condition violates the constraint \eref{eq1} of the construction of $\mathcal{F}$.
\item CC2: $(\varphi_c(v),\varphi_b(v),\varphi_z(v)) = (\varphi_c(v^\prime),\varphi_b(v^\prime),\varphi_z(v^\prime))$: this condition violates the constraint \eref{eq2} of the construction of $\mathcal{F}$.
\item CC3: $\delta(\varphi_u(v) - \varphi_u(v^\prime))\delta(\varphi_z(v) - \varphi_z(v^\prime)) = 1$: this condition violates the constraint \eref{eq3} of the construction of $\mathcal{F}$.
\end{itemize}
Therefore, each pair of vertices $v \neq v^\prime \in I$ are not connected which demonstrates that $I$ is an independent set of vertices in the conflict graph. Finally, from the construction constraint \eref{eq4}, $\mathcal{S}$ and by extension $I$ have $Z_{\text{tot}}$ associations. Therefore, $I$ is a set of $Z_{\text{tot}}$ independent vertices which concludes that $I \in \mathcal{I}$. The uniqueness of $I$ follows directly from the bijection between the set of vertices in the graph and the set of associations in $\mathcal{A}$.

To establish the converse, let $I \in \mathcal{I}$ be an independent set of size $Z_{\text{tot}}$ and let $\mathcal{S}$ be its corresponding schedule. Using an argument similar to the one in previous paragraph, it can be easily shown that all the associations in $\mathcal{S}$ verify the constraints \eref{eq1}, \eref{eq2}, and \eref{eq3}. Given that $I$ is of size $Z_{\text{tot}}$, then $\mathcal{S}$ verify \eref{eq4} which concludes that $\mathcal{S} \in \mathcal{F}$. Uniqueness of the element is given by the same argument as earlier.

To conclude the proof, note that the weight of an independent set $I \in \mathcal{I}$ is equal to the objective function \eref{new_optimization_problem} and by extension to the original objective function \eref{Original_optimization_problem}. Therefore, the globally optimal solution of the joint scheduling problem in multi-cloud network \eref{Original_optimization_problem} is equivalent to a maximum-weight independent set among the independent sets of size $Z_{\text{tot}}$ in the conflict graph.

\section{Proof of \thref{th2}}\label{app2}

To show that the distributed solution reached by \algref{algo1} is the optimal solution to the scheduling problem \eref{Original_optimization_problem}, we first show that solving the maximum-weight independent set locally when users are assigned to a cloud like in the optimal solution to \eref{Original_optimization_problem} will yield the optimal solution. Afterward, we show that \algref{algo1} assigns users to clouds as in the optimal solution. Combining the two above points concludes that the solution reached by \algref{algo1} is the optimal solution to \eref{Original_optimization_problem}. To finish the proof, we show that the running time of the algorithm is bounded.

First define $\mathcal{I}_c$ as the set of independent sets of size $BZ$ in the local conflict graph of cloud $c$. Let $I_c \in \mathcal{I}_c$ be an independent sets. The following lemma states the feasibility of the schedule $\mathcal{S} = \bigcup_{c \in \mathcal{C}} I_c$.

\begin{lemma}
Let $I_c \in \mathcal{I}_c$ be an independent set of size $BZ$ in the local conflict graph of cloud $c \in \mathcal{C}$ (i.e., $\varphi_c(v)=c, \ \forall \ v \in I_c$) such that each user is assigned to at most a single cloud. In other words, for $c \neq c^\prime$, we have $\varphi_u(v) \neq \varphi_u(v^\prime), \ \forall \ v \in I_c, \ v^\prime \in I_{c^\prime}$. The schedule $\mathcal{S} = \bigcup_{c \in \mathcal{C}} I_c$ is a feasible solution to the optimization problem \eref{Original_optimization_problem}.
\label{l3}
\end{lemma}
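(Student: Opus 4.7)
The plan is to verify directly that the candidate schedule $\mathcal{S} = \bigcup_{c \in \mathcal{C}} I_c$ satisfies each of the four conditions \eref{eq1}–\eref{eq4} that define $\mathcal{F}$ in \lref{l2}. Since \lref{l2} already recast feasibility of \eref{Original_optimization_problem} into membership in $\mathcal{F}$, this is sufficient.

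First I would handle the cardinality condition \eref{eq4}. The local conflict graph of cloud $c$ only contains vertices whose cloud association equals $c$, so the sets $I_c$ live in pairwise disjoint slices of $\mathcal{A}$. Hence $|\mathcal{S}| = \sum_{c \in \mathcal{C}} |I_c| = CBZ = Z_{\text{tot}}$, as required.

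For the three no-conflict conditions \eref{eq1}, \eref{eq2}, \eref{eq3}, I would split an arbitrary pair $a \neq a' \in \mathcal{S}$ into two cases. In the intra-cloud case, $a, a' \in I_c$ for some $c$, and local independence prohibits any of CC1, CC2, CC3 from holding between them; since the local conflict graph is built with the same connectivity rules as the global one (merely restricted to vertices with cloud index $c$), this directly gives \eref{eq1}, \eref{eq2}, \eref{eq3}. In the inter-cloud case, $a \in I_c$ and $a' \in I_{c'}$ with $c \neq c'$, and I would check the three conditions in turn: \eref{eq1} reduces to "same user across different clouds", which is precisely what the hypothesis $\varphi_u(v) \neq \varphi_u(v')$ for $v \in I_c$, $v' \in I_{c'}$ forbids; \eref{eq2} holds automatically since $\varphi_c(a) = c \neq c' = \varphi_c(a')$ already distinguishes the triples; and \eref{eq3} again requires a shared user, which the same inter-cloud hypothesis rules out.

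The main subtlety, rather than an obstacle, is simply to be explicit that local independence in $\mathcal{G}_c(\mathcal{U}_c)$ translates without loss to non-adjacency in the global conflict graph $\mathcal{G}$ for intra-cloud pairs (so no additional intra-cloud conflicts can sneak in when the $I_c$'s are merged). Once this identification is made, the inter-cloud check is immediate from the hypotheses and the cardinality is a disjoint-union count, so assembling the three pieces concludes that $\mathcal{S} \in \mathcal{F}$ and hence is a feasible solution to \eref{Original_optimization_problem}.
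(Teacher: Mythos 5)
Your proposal is correct and follows essentially the same route as the paper's proof: the paper phrases the check as ``$\mathcal{S}$ is an independent set of size $CBZ$ in the global conflict graph'' (invoking \thref{th1}) rather than ``$\mathcal{S}\in\mathcal{F}$'' (invoking \lref{l2}), but by the established bijection these are the same condition, and the substance --- intra-cloud pairs are non-adjacent by local independence under identical connectivity rules, CC2 cannot fire across clouds since the cloud indices differ, and CC1/CC3 across clouds would require a shared user, which the hypothesis forbids --- is identical. No gaps.
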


\begin{proof}
The proof can be found in \appref{app6}.
\end{proof}

Let $\mathcal{U}_c^s$ be the set of users scheduled in cloud $c$ at the optimal solution $X_{cubz}^*$ of the optimization problem \eref{Original_optimization_problem}. The mathematical definition of this set is the following:
\begin{align}
\mathcal{U}_c^s = \left\{ u \in \mathcal{U} \ | \ \exists \ (b,z) \in \mathcal{B} \times \mathcal{Z} \text{ such that } X^*_{cubz} = 1\right\}
\end{align}

Showing that solving the maximum-weight independent set locally when users are assigned to cloud like in the optimal solution to \eref{Original_optimization_problem} yields the optimal solution is equivalent to showing the following. Assume that the set allowed users $\mathcal{U}_c^s \subseteq \mathcal{U}_c$ by cloud $c$ is set of users scheduled in that cloud at the optimal solution and let $I_c \in \mathcal{I}_c$ be the maximum-weight independent set of cloud $c$. We have to show that $\mathcal{S} = \bigcup_{c \in \mathcal{C}} I_c$ is the optimal scheduling. According to \thref{th1}, the optimal solution of \eref{Original_optimization_problem} can be written as follows:
\begin{align}
\max_{I \in \mathcal{I}} \sum_{v \in I} w(v)
\end{align}

Let $\mathcal{U}_s = \bigcup_{c \in \mathcal{C}} \mathcal{U}_c^s$ be the set of all scheduled users in the optimal solution. We show that the optimal solution to \eref{Original_optimization_problem} is the same if we consider $\mathcal{U} = \mathcal{U}_s$. Let $\mathcal{I}_s$ be the set of independent sets of size $CBZ$ in the conflict graph $\mathcal{G}(\mathcal{U}_s)$. Therefore, since $\mathcal{I}_s \subseteq \mathcal{I}$, the optimal solution can be written as:
\begin{align}
\max_{I \in \mathcal{I}_s} \sum_{v \in I} w(v) \leq \max_{I \in \mathcal{I}} \sum_{v \in I} w(v)
\label{eq:th26}
\end{align}

However, for the solution $X_{cubz}^*$ we have $\max\limits_{I \in \mathcal{I}_s} \sum_{v \in I} w(v) = \max\limits_{I \in \mathcal{I}} \sum_{v \in I} w(v)$. Therefore, the optimal schedule when considering $\mathcal{U} = \mathcal{U}_s$ is the same as the optimal one of problem \eref{Original_optimization_problem}. We also have $\mathcal{I}_s = \bigcup_{c \in \mathcal{C}} \mathcal{I}_c$. Therefore, the optimal solution of \eref{Original_optimization_problem} can be bounded by the following quantity:
\begin{align}
\max_{I \in \mathcal{I}} \sum_{v \in I} w(v) &= \max_{I \in \mathcal{I}_s} \sum_{v \in I} w(v) \nonumber \\
&= \max_{I \in \bigcup_{c \in \mathcal{C} \mathcal{I}_c}} \sum_{v \in I} w(v) \nonumber \\
& \leq \sum_{c \in \mathcal{C}} \max_{I_c \in \mathcal{I}_c} \sum_{v \in I_c} w(v)
\end{align}

From the feasibility of the optimal solution $X_{cubz}^*$, we have $\mathcal{U}_c^s \cap \mathcal{U}_{c^\prime}^s = \varnothing, \ \forall \ c \neq c^\prime$. Therefore from \lref{l3}, the schedule $\mathcal{S}$ is a feasible solution. In other words, the upper bound is achievable. Therefore, the schedule $\mathcal{S} = \bigcup_{c \in \mathcal{C}} I_c$ is the optimal solution to \eref{Original_optimization_problem}.

We now show that \algref{algo1} assigns users to clouds as in the optimal solution. Assume that a user $u$ scheduled in the cloud $c^*$ in the optimal solution is assigned to that cloud in \algref{algo1}. In other words, we have $u \in \mathcal{U}_{c^*}^s$ and $u \notin \mathcal{U}_{c^*}$. This can happen only if at some time round $t$ in the algorithm, user $u$ is assigned to another cloud $c$. Hence, at some time round we have $u \in \mathcal{K}$, $c^*,c \in \hat{\mathcal{C}}(u)$ and the following equation holds:
\begin{align}
\pi_{cu} + \sum_{\substack{ c^\prime \in \hat{\mathcal{C}}(u) \\ c^\prime \neq c}} \overline{\pi}_{c^\prime u} &\geq \pi_{c^* u} + \sum_{\substack{ c^\prime \in \hat{\mathcal{C}}(u) \\ c^\prime \neq c^*}} \overline{\pi}_{c^\prime u} \nonumber \\
\pi_{cu} + \overline{\pi}_{c^* u} + \sum_{\substack{ c^\prime \in \hat{\mathcal{C}}(u) \\ c^\prime \neq c,c^*}} \overline{\pi}_{c^\prime u} &\geq \pi_{c^* u} + \overline{\pi}_{c u} + \sum_{\substack{ c^\prime \in \hat{\mathcal{C}}(u) \\ c^\prime \neq c,c^*}} \overline{\pi}_{c^\prime u} \nonumber \\
\pi_{cu} + \overline{\pi}_{c^* u} &\geq \pi_{c^* u} + \overline{\pi}_{c u}
\label{eq:th21}
\end{align}

Let $\pi^*$, the objective function of the optimization problem \eref{Original_optimization_problem} at the optimal solution $X_{cubz}^*$, be decomposed as follows:
\begin{align}
\pi^* = \sum_{c^\prime \in \mathcal{C}} \pi^*_{c^\prime} = \sum_{\substack{ c^\prime \in \mathcal{C} \\ c^\prime \neq c,c^*}} \pi^*_{c^\prime} + \pi^*_c + \pi^*_{c^*}
\end{align}

Since user $u$ is scheduled to the cloud $c^*$ in the optimal solution then $\pi^*_{c^*} = \pi^*_{c^*u}$. Moreover, it is clear that $\pi^*_c \leq \overline{\pi}_{cu}$ since $\overline{\pi}_{cu}$ is the optimal schedule for cloud $c$ when it is not allowed to schedule user $u$. Therefore, the optimal objective function of the problem \eref{Original_optimization_problem} is bounded by the following quantity:
\begin{align}
\pi^* \leq \sum_{\substack{ c^\prime \in \mathcal{C} \\ c^\prime \neq c,c^*}} \pi^*_{c^\prime} + \overline{\pi}_{cu} + \pi^*_{c^*u}
\label{eq:th22}
\end{align}

Moreover, it is clear that the merit $\pi_{c^*u}$ of user $u$ that is scheduled to cloud $c^*$, regardless of the feasibility of the whole schedule is higher than any other scheduling feasibility of the entire schedule. In particular, we have:
\begin{align}
\pi^*_{c^*u} \leq \pi_{c^*u}.
\label{eq:th23}
\end{align}
Substituting \eref{eq:th23} in \eref{eq:th22} then applying \eref{eq:th21} yields the following inequality:
\begin{align}
\pi^* &\leq \sum_{\substack{ c^\prime \in \mathcal{C} \\ c^\prime \neq c,c^*}} \pi^*_{c^\prime} + \overline{\pi}_{cu} + \pi_{c^*u} \nonumber \\
&\leq \sum_{\substack{ c^\prime \in \mathcal{C} \\ c^\prime \neq c,c^*}} \pi^*_{c^\prime} + \pi_{cu} + \overline{\pi}_{c^* u}
\label{eq:th24}
\end{align}

Now consider the scheduling in which user $u$ is scheduled to cloud $c$ and all the scheduling for clouds $c^\prime \neq c,c^*$ is the same. The merit function $\pi$ of such scheduling is:
\begin{align}
\pi = \sum_{\substack{ c^\prime \in \mathcal{C} \\ c^\prime \neq c,c^*}} \pi^*_{c^\prime} + \pi_{c} + \pi_{c^*}
\end{align}
Since user $u$ is scheduled to cloud $c$ then $\pi_{c} = \pi_{cu}$.
\begin{align}
\pi = \sum_{\substack{ c^\prime \in \mathcal{C} \\ c^\prime \neq c,c^*}} \pi^*_{c^\prime} + \pi_{cu} + \pi_{c^*}
\end{align}
The merit $\overline{\pi}_{c^* u}$ being the optimal benefit when user $c^*$ is not allowed to schedule user $u$ that it is greater than the merit of any schedule that do not schedule user $u$. In particular, since in $\pi_{c^*}$ user $u$ is not scheduled to cloud $c^*$ then we obtain:
\begin{align}
\overline{\pi}_{c^* u} \leq \pi_{c^*}
\label{eq:th25}
\end{align}
Substituting \eref{eq:th25} in \eref{eq:th24}, we obtain:
\begin{align}
\pi^* &\leq \sum_{\substack{ c^\prime \in \mathcal{C} \\ c^\prime \neq c,c^*}} \pi^*_{c^\prime} + \pi_{cu} + \pi_{c^*} \leq \pi,
\end{align}
which is in contradiction with the fact that $pi^*$ is the optimal weight that that $\pi$ is the merit of a feasible schedule. Finally, we conclude that \eref{eq:th21} do not hold, and that \algref{algo1} assigns users to clouds as in the optimal solution.

To show that the optimal solution is reached by \algref{algo1} we combine the previous two results. First note that when the algorithm terminates we have $\mathcal{K} = \varnothing$. Let $I_c$ the maximum-weight clique in each cloud. Using a proof similar to the one in \eref{eq:th26}, we can easily show that the maximum-weight clique $I_c$ do not change if we consider the set $\tilde{\mathcal{U}}_c$ of users used in the scheduling $I_c$ instead of $\mathcal{U}_c$. Since $\mathcal{K} = \varnothing$ then $\tilde{\mathcal{U}}_c \cap \tilde{\mathcal{U}}_{c^\prime} = \varnothing$. Moreover, we show above that $\mathcal{U}_c^s \subseteq \tilde{\mathcal{U}}_c$. As shown earlier, this condition is equivalent to solving optimally the scheduling problem. Finally, the optimal solution can be reached by \algref{algo1}.

To show that the running time of the algorithm is bounded it this sufficient to note that at each time round of the algorithm, since $\mathcal{K} \neq \varnothing$, then $\exists \ c,u$ such that $\mathcal{U}_c = \mathcal{U}_c \setminus \{u\}$. In other words, $|\mathcal{U}_c| = |\mathcal{U}_c|-1$ Since that $|\mathcal{U}_c|$ is lower bounded by $|\mathcal{U}_c^s|$. Therefore, the running time of \algref{algo1} is bounded by $C (\max_c |\mathcal{U}_c| - \min_c |\mathcal{U}_c^s|)$. Clearly, we have $\max_c |\mathcal{U}_c| = U$. We now show that $\min_c |\mathcal{U}_c^s|=Z$.

We show that for a schedule $\mathcal{S}$ to be feasible, a user $u$ assigned to cloud $c$ can be scheduled to at most $Z$ PZs across the different BSs in $c$. Assume that user $u$ is connected to $Z^\prime > Z$ PZs then the schedule $\mathcal{S}$ contains $Z^\prime$ vertices $v$ such that $\varphi_u(v) =u$. The number of PZ index being $Z$ then from the pigeon-hole principle $\exists \ v,v^\prime$ such that $\varphi_z(v) =\varphi_z(v^\prime)$. From the graph connectivity condition C3, we have $\delta(\varphi_u(v) - \varphi_u(v^\prime))\delta(\varphi_z(v) - \varphi_z(v^\prime)) = 1$. Therefore, vertices $v$ and $v^\prime$ are connected which is in contradiction with the fact that the schedule $\mathcal{S}$ is a feasible solution and hence an independent set. Finally, The running time of the algorithm is bounded by $C(U-B)$. Note that since $U \geq B$ for the problem to have at least one solution, then the quantity is always positive.

\section{Proof of \corref{cor1}}\label{app3}

To prove this corollary, it is sufficient to show that \algref{algo2} converges. Afterwards, applying the result of \lref{l3} guarantee the feasibility of the solution. At time round of the algorithm such that $\mathcal{K} \neq \varnothing$, we have $\exists \ c^*,u$ such that $\forall \ c \neq c^*$ we have $\mathcal{U}_c = \mathcal{U}_c \setminus \{u\}$. Therefore, the running time of the algorithm is bounded by $\max_c |\mathcal{U}_c|$ which is equal to $U$ from \thref{th2}. Therefore, \algref{algo2} converges and outputs the independent sets $I_c \in \mathcal{I}_c$. From \lref{l3}, the solution $\mathcal{S} = \bigcup_{c \in \mathcal{C}} I_c$ is a feasible solution to the optimization problem \eref{Original_optimization_problem} since $\mathcal{U}_c \cap \mathcal{U}_{c^\prime} = \varnothing ,\ \forall \ c\neq c^\prime$.

\section{Proof of \lref{l1}}\label{app4}

Note that the constraints \eref{constraint6}, \eref{constraint7} and \eref{constraint8} of the optimization problem \eref{full_problem} are the same constraints as \eref{constraint3}, \eref{constraint4} and \eref{constraint5}, respectively, in the original optimization problem \eref{Original_optimization_problem}. Therefore, this lemma can be proved using steps similar to the one used in \thref{th1}.

Let $\mathcal{F} \subset \mathcal{P}(\mathcal{A})$ be the set of feasible schedules. Given the mapping between the original constraints of the problem and the constraints of constructing the set $\mathcal{F}$ illustrated in \lref{l2}, it can be easily shown that problem \eref{full_problem} can be written as follows:
\begin{align}
&\max_{\mathcal{S} \in \mathcal{P}(\mathcal{A})} \sum_{a \in \mathcal{S}} \pi(a) \\
&{\rm s.t.\ } \mathcal{S} \in \mathcal{F},
\end{align}
where $\mathcal{F}$ is the set of feasible schedules defined as follows:
\begin{subequations}
\begin{align}
&\mathcal{F}= \{ \mathcal{S} \in \mathcal{P}(\mathcal{A}) \text{ such that } \forall \ a \neq a^\prime \in \nonumber \mathcal{S}\\
& (\varphi_c(a),\varphi_b(a),\varphi_z(a)) \neq (\varphi_c(a^\prime),\varphi_b(a^\prime),\varphi_z(a^\prime)) , \\
& \delta(\varphi_u(a) - \varphi_u(a^\prime))\delta(\varphi_z(a) - \varphi_z(a^\prime)) =0 \\
&|\mathcal{S}| = Z_{\text{tot}} \}.
\end{align}
\end{subequations}

Let the reduced conflict graph be constructed by generating a vertex of each association $a \in \mathcal{A}$ and connecting two distinct vertices $v$ and $v^\prime$ if one of the following two conditions holds:
\begin{itemize}
\item CC2: $(\varphi_c(v),\varphi_b(v),\varphi_z(v)) = (\varphi_c(v^\prime),\varphi_b(v^\prime),\varphi_z(v^\prime))$.
\item CC3: $\delta(\varphi_u(v) - \varphi_u(v^\prime))\delta(\varphi_z(v) - \varphi_z(v^\prime)) = 1$.
\end{itemize}

Define $\mathcal{I}$ as the set of the independent set of vertices of size $Z_{\text{tot}}$ in the reduced conflict graph. Following steps similar to the one used in \thref{th1}, it can be shown that there is a one to one mapping between the set of feasible schedule $\mathcal{F}$ and the set $\mathcal{I}$ and that the objective function is represented by the sum of the weight of the vertices in the independent set. As a conclusion, the optimal solution to the scheduling problem \eref{full_problem} in signal-level coordinated cloud-enabled network is the maximum-weight independent set of size $CBZ$ in the reduced conflict graph.

\section{Proof of \lref{l2}}\label{app5}

To prove this lemma, it is sufficient to prove to that the objective function and the constraints of \eref{Original_optimization_problem} are equivalent to those of the optimization problem \eref{new_optimization_problem}. The objective function of \eref{Original_optimization_problem} is equivalent to the one of \eref{new_optimization_problem} as shown in the following equation:
\begin{align}
\sum_{c,u,b,z}\pi_{cubz}X_{cubz} = \sum_{a \in \mathcal{A}}\pi(a)X(a) = \sum_{a \in \mathcal{S}}\pi(a),
\end{align}
where $X(a)$ is defined in the same manner as $\pi(a)$, i.e., $X(a)= X_{cubz}$ for $a=(c,u,b,z) \in \mathcal{A}$ and $\mathcal{S} =\{a \in \mathcal{A} \ | \ X(a)=1 \}$. Therefore, the two objective functions are equivalent:
\begin{align}
\max \sum_{c,u,b,z}\pi_{cubz}X_{cubz} = \max_{\mathcal{S} \in \mathcal{P}(\mathcal{A})} \sum_{a \in \mathcal{S}}\pi(a).
\end{align}

In what follows, the constraints \eref{constraint1} and \eref{constraint2} are shown to be equivalent to the constraint \eref{eq1}, the constraint \eref{constraint3} is proven to be equivalent to \eref{eq2} and \eref{eq4}. Finally to conclude the proof, \eref{constraint4} is demonstrated to be the same constraint as \eref{eq3}.

Define $\mathcal{S}_{cu} \subset \mathcal{S}$ as the set of associations in schedule $\mathcal{S}$ concerning the $c$th cloud and the $u$th user. The expression of the set is the following:
\begin{align}
\mathcal{S}_{cu} = \left\{ a \in \mathcal{S} \ | \ \varphi_c(a)=c ,\ \varphi_u(a)=u \right\}.
\end{align}
Let $\mathcal{S}_{u} \subset \mathcal{P}(\mathcal{S})$ be the set of all the set concerning user $u$ defined as:
\begin{align}
\mathcal{S}_{u} = \left\{ \mathcal{S}_{cu}, c \in \mathcal{C} \right\}.
\end{align}
The constraints \eref{constraint1} (i.e., $Z_{cu} = 1 - \delta \bigg(\sum_{b,z}X_{cubz}\bigg)$) and \eref{constraint2} (i.e., $\sum_{c}Z_{cu} \leq 1$) are equivalent to the following constraint
\begin{align}
Z_{cu} = 1 - \delta \bigg(\sum_{b,z}X_{cubz}\bigg) \leq 1 \Leftrightarrow |\mathcal{S}_u| \leq 1 , \ \forall \ u.
\label{l2:eq6}
\end{align}
We now show that the inequality $|\mathcal{S}_u| \leq 1$ is equivalent to the following equality $\forall \ a \neq a^\prime \in \mathcal{S}$:
\begin{align}
\delta(\varphi_u(a) - \varphi_u(a^\prime)) (1-(\delta(\varphi_c(a)-\varphi_c(a^\prime))) = 0
\label{l2:eq7}
\end{align}

First note that if $a \in \mathcal{S}_{u}$ and $a^\prime \in \mathcal{S}_{u^\prime}$ with $u \neq u^\prime$, then $\varphi_u(a) \neq \varphi_u(a^\prime)$ which concludes that \eref{l2:eq7} holds for such $a$ and $a^\prime$. Now let $a \neq a^\prime \in \mathcal{S}_{u}$. Since $|\mathcal{S}_u| \leq 1$ then $\exists$ unique $c \in \mathcal{C}$ such that $\mathcal{S}_{cu} \neq \varnothing$. Hence $a \neq a^\prime \in \mathcal{S}_{cu}$, i.e., $\varphi_c(a) = \varphi_c(a^\prime)$ which concludes that \eref{l2:eq7} holds for such $a$ and $a^\prime$. Given that $\mathcal{S}$ can be written as $\bigcup_{u} \mathcal{S}_u$, then \eref{l2:eq7} is valid $\forall \ a \neq a^\prime \in \mathcal{S}$. Combining \eref{l2:eq6} and \eref{l2:eq7} proves that the constraints \eref{constraint1} and \eref{constraint2} are equivalent to the constraint \eref{eq1}.

Define $\mathcal{S}_{cbz} \subset \mathcal{S}$ as the set of associations in schedule $\mathcal{S}$ concerning the $z$th PZ in the $b$th BS connected to the $c$th cloud. The expression of the set is the following:
\begin{align}
\mathcal{S}_{cbz} = \left\{ a \in \mathcal{S} \ | \ \varphi_c(a)=c ,\ \varphi_b(a)=b ,\ \varphi_z(a)=z \right\}.
\end{align}
The constraint \eref{constraint3} can be written as a function of the partial schedules as follows:
\begin{align}
\sum_{u}X_{cubz}=1 \Leftrightarrow |\mathcal{S}_{cbz}|=1, \ \forall \ (c,b,z).
\label{l2:eq1}
\end{align}
Assume $\exists \ a \neq a^\prime \in \mathcal{S}$ such that $\varphi_c(a)=\varphi_c(a^\prime) ,\ \varphi_b(a)=\varphi_b(a^\prime)$ , and $\varphi_z(a)=\varphi_z(a^\prime)$. It is clear that $a,a^\prime \in \mathcal{S}_{cbz}$ where $c=\varphi_c(a) ,\ b=\varphi_b(a)$, and $z=\varphi_z(a)$. However, from \eref{l2:eq1}, we have $|\mathcal{S}_{cbz}|=1$. Therefore, $a = a^\prime$ which concludes that, $\forall \ a \neq a^\prime \in \mathcal{S}$, we have:
\begin{align}
(\varphi_c(a),\varphi_b(a),\varphi_z(a)) \neq (\varphi_c(a^\prime),\varphi_b(a^\prime),\varphi_z(a^\prime)).
\label{l2:eq2}
\end{align}
We now show that $\mathcal{S}_{cbz} \cap \mathcal{S}_{c^\prime b^\prime z^\prime} = \varnothing$ for all sets in which at least one of the following holds: $c \neq c^\prime$, and/or $b \neq b^\prime$, and/or $z \neq z^\prime$. From \eref{l2:eq1}, both sets contain a single association, hence $\mathcal{S}_{cbz} \cap \mathcal{S}_{c^\prime b^\prime z^\prime} \neq \varnothing$ means that $\mathcal{S}_{cbz} = \mathcal{S}_{c^\prime b^\prime z^\prime}$ which do not hold since $c \neq c^\prime$, and/or $b \neq b^\prime$, and/or $z \neq z^\prime$. As a conclusion, the cardinality of the schedule $\mathcal{S}$ can be written as:
\begin{align}
|\mathcal{S}| = \left|\bigcup_{c,b,z} \mathcal{S}_{cbz} \right| = \bigcup_{c,b,z} \left|\mathcal{S}_{cbz} \right| = CBZ = Z_{\text{tot}} .
\label{l2:eq3}
\end{align}
The combination of equations \eref{l2:eq1}, \eref{l2:eq2} and \eref{l2:eq3} shows that the constraint \eref{constraint3} is equivalent to \eref{eq2} and \eref{eq4}.

Define $\mathcal{S}_{uz} \subset \mathcal{S}$ as the set of associations in schedule $\mathcal{S}$ concerning the $u$th user scheduled in the $z$th PZ of one of the connected BS. The expression of the set is the following:
\begin{align}
\mathcal{S}_{uz} = \left\{ a \in \mathcal{S} \ | \ \varphi_u(a)=u ,\ \varphi_z(a)=z \right\}.
\end{align}
The constraint \eref{constraint4} can be written as a function of the partial schedules as follows:
\begin{align}
Y_{uz} = \sum_{cb} X_{cubz} \leq 1 \Leftrightarrow |\mathcal{S}_{uz}| \leq 1, \ \forall \ (u,z).
\label{l2:eq4}
\end{align}
To conclude the proof, it is sufficient to show that, if $ |\mathcal{S}_{uz}| \leq 1, \ \forall \ (u,z)$, then the following equation holds for $\ a \neq a^\prime \in \mathcal{S}$:
\begin{align}
\delta(\varphi_u(a) - \varphi_u(a^\prime))\delta(\varphi_z(a) - \varphi_z(a^\prime)) =0.
\label{l2:eq5}
\end{align}
Let the schedule be partitioned into partial schedules as follows $\mathcal{S} = \bigcup_{uz}\mathcal{S}_{uz}$. For $a \in \mathcal{S}_{uz}$ and $a^\prime \in \mathcal{S}_{u^\prime z^\prime} \neq \mathcal{S}_{uz}$, it is clear that either $u \neq u^\prime$ and/or $z \neq z^\prime$. Hence, equality \eref{l2:eq5} holds for all $a \in \mathcal{S}_{uz}$ and $a^\prime \in \mathcal{S}_{u^\prime z^\prime} \neq \mathcal{S}_{uz}$. Given that $|\mathcal{S}_{uz}| \leq 1$, then $\nexists \ a \neq a^\prime \in \mathcal{S}_{uz}, \ \forall \ (u,z)$ which concludes that \eref{l2:eq5} is verified. The combination of equations \eref{l2:eq4}, and \eref{l2:eq5} shows that the constraint \eref{constraint4} is equivalent to \eref{eq3}.

\section{Proof of \lref{l3}} \label{app6}

To show this lemma, according to \thref{th1}, we only need to show that the schedule $\mathcal{S} = \bigcup_{c \in \mathcal{C}} I_c$ is an independent set of size $CBZ$ in the conflict graph. Since $I_c, \ \forall \ c \in \mathcal{C}$ is an independent set in the local conflict graph then proving that $\mathcal{S}$ is an independent set in the conflict graph boils down to proving that there are no connections between any pair of vertices belonging to different local independent set $I_c$ and $I_{c^\prime}, \ c \neq c^\prime$.

Let $\mathcal{G}_c$ and $\mathcal{G}_{c^\prime}$ be two distinct local conflict graphs (i.e., $c \neq c^\prime$). We show that if $v \in \mathcal{G}_c$ and $v^\prime \in \mathcal{G}_{c^\prime}$ are connected then $\varphi_u(v) = \varphi_u(v^\prime)$. From the connectivity conditions of vertices, $v$ and $v^\prime$ are connected if and only if at least one of the following conditions is verified:
\begin{itemize}
\item C1: $\delta(\varphi_u(v) - \varphi_u(v^\prime))(1-\delta(\varphi_c(v) - \varphi_c(v^\prime))) = 1$.
\item C2: $(\varphi_c(v),\varphi_b(v),\varphi_z(v)) = (\varphi_c(v^\prime),\varphi_b(v^\prime),\varphi_z(v^\prime))$.
\item C3: $\delta(\varphi_u(v) - \varphi_u(v^\prime))\delta(\varphi_z(v) - \varphi_z(v^\prime)) = 1$.
\end{itemize}

Clearly condition C2 cannot be satisfied since $\varphi_c(v)=c \neq c^\prime = \varphi_c(v^\prime)$. Now assume that $\varphi_u(v) \neq \varphi_u(v^\prime)$, then $\delta(\varphi_u(v) - \varphi_u(v^\prime))=0$. This last equality concludes that conditions C1 and C3 are not satisfied and hence the vertices not connected, which is a contradiction with the initial assumption. Therefore, $\varphi_u(v) = \varphi_u(v^\prime)$ for vertices $v$ and $v^\prime$ belonging to different local conflict graphs $\mathcal{G}_c$ and $\mathcal{G}_{c^\prime}$.

Given that in the schedule $\mathcal{S}$, we have $\varphi_u(v) \neq \varphi_u(v^\prime), \ \forall \ v \in I_c, \ v^\prime \in I_{c^\prime}$ then there are no connections between any pair of vertices belonging to different local independent set $I_c$ and $I_{c^\prime}$. Therefore, $\mathcal{S}$ is an independent set in the conflict graph which size is equal to the sum of size of the local independent sets $I_c$. In other words, $\mathcal{S}$ is an independent set of size $Z_{\text{tot}} = CBZ$ which concludes that it is a feasible solution to the optimization problem \eref{Original_optimization_problem}.

\bibliographystyle{IEEEtran}
\bibliography{citations}

\end{document}